\documentclass[11pt]{article}

\usepackage[T1]{fontenc}
\usepackage{amsmath,amssymb,amsthm}
\usepackage{array}
\usepackage{booktabs}
\usepackage{float}
\usepackage{graphicx}
\usepackage{microtype}
\usepackage[numbers,sort&compress]{natbib}
\usepackage{tabularx}
\usepackage[letterpaper,margin=1in]{geometry}
\usepackage{xcolor}
\usepackage{hyperref}
\usepackage{orcidlink}

\graphicspath{{figures/}}
\hypersetup{
  colorlinks=true,
  linkcolor=blue!55!black,
  citecolor=green!45!black,
  urlcolor=blue!70!black,
  pdftitle={Learning Aerosol Coagulation Dynamics in Latent Space with a Scale-Covariant Neural ODE},
}

\newtheorem{theorem}{Theorem}
\newtheorem{corollary}{Corollary}
\newtheorem{proposition}{Proposition}
\newtheorem{lemma}{Lemma}
\newcommand{\stopgrad}{\operatorname{stopgrad}}

\newcommand{\modelparam}[2]{#2}

\title{Learning Aerosol Coagulation Dynamics in Latent Space with a Scale-Covariant Neural ODE}
\author{%
Wenhan Tang$^{1}$\,\orcidlink{0009-0009-5076-036X},
Ruqi Yang$^{2,3}$\,\orcidlink{0000-0002-1602-7502},
Jeffrey H. Curtis$^{1}$\,\orcidlink{0000-0002-1447-2127},
Ehsan Saleh$^{4}$,\\
Lekha Patel$^{5}$\,\orcidlink{0000-0003-3508-0672},
Peter A. Bosler$^{5}$\,\orcidlink{0000-0002-3356-0296},
Nicole Riemer$^{1,*}$\,\orcidlink{0000-0002-3220-3457},
and Matthew West$^{4}$\,\orcidlink{0000-0002-7605-0050}\\[0.75em]
\parbox{0.95\textwidth}{\centering\small
$^{1}$Department of Climate, Meteorology \& Atmospheric Sciences, University of Illinois Urbana-Champaign, Urbana, IL, USA\\
$^{2}$Department of Forest and Wildlife Ecology, University of Wisconsin--Madison, Madison, WI, USA\\
$^{3}$Department of Computer Sciences, University of Wisconsin--Madison, Madison, WI, USA\\
$^{4}$Department of Mechanical Science and Engineering, University of Illinois Urbana-Champaign, Urbana, IL, USA\\
$^{5}$Center for Computing Research, Sandia National Laboratories, Albuquerque, NM, USA\\[0.75em]
\textsuperscript{*}Corresponding author: Nicole Riemer
(\href{mailto:nriemer@illinois.edu}{nriemer@illinois.edu})
}
}
\date{}

\begin{document}

\maketitle

\begin{abstract}
Particle-resolved aerosol models preserve the joint size--composition structure
that governs cloud activation, optical properties, and freezing, but their
computational cost limits their use in large-scale atmospheric models. We
extend AeroMELD from a compact representation of aerosol populations to a
prognostic model of coagulation. The resulting AeroMELD-Coag advances
\modelparam{model.latent_dim}{nine} learned coordinates for population shape
and one for total particle number through a scale-covariant neural ordinary
differential equation that incorporates the known concentration dependence of
coagulation. Across \modelparam{split.test_trajectories}{2{,}000} held-out
trajectories spanning 48 h, median symmetric errors are
\modelparam{results.latent.pooled_median_pct}{3.5\%} for latent shape and
\modelparam{results.number.pooled_median_pct}{0.52\%} for total number. This
\modelparam{model.total_state_dim}{ten}-coordinate state also retains the
evolving size-resolved composition and associated cloud-condensation-nuclei
(CCN) activation, optical properties, and frozen fraction. Compared with the
evaluated \modelparam{baseline.sectional.bin_count}{20}-bin sectional model
with \modelparam{baseline.sectional.state_dim}{320} prognostic coordinates,
AeroMELD-Coag lowers pooled mean CCN activation error from
\modelparam{results.sectional.ccn.pooled_mean_pct}{1.98\%} to
\modelparam{results.decoded.ccn.pooled_mean_pct}{1.38\%} and frozen-fraction
error from \modelparam{results.sectional.frozen.pooled_mean_pct}{1.72\%} to
\modelparam{results.decoded.frozen.pooled_mean_pct}{0.56\%}. In a matched
integration benchmark, GPU integration time per trajectory, amortized over a
batch, is more than three orders of magnitude lower than for both single-core
CPU references. These coagulation results provide a proof of concept for
advancing aerosol microphysics efficiently in a compact learned state while
retaining information about aerosol mixing state. They establish a foundation
for future large-scale atmospheric models to represent aerosol microphysics
with greater mixing-state detail within practical computational budgets.
\end{abstract}

\noindent\textbf{Keywords:} aerosol mixing state; coagulation; latent dynamics; particle-resolved modeling; scale covariance; multistep learning

\section{Introduction}\label{introduction}

Particle-resolved aerosol models retain the joint particle-size and composition
information needed to represent mixing-state effects on
cloud-condensation-nuclei (CCN) activity, heterogeneous ice nucleation, and
optical properties
\cite{poschl2005,mcfiggans2006,hoose2012,fierce2016,ching2017,riemer2019mixing,yao2022,tang2026freezing}.
This fidelity is computationally expensive and produces large,
variable-cardinality states, limiting repeated simulation and direct use across
atmospheric model grids. Conventional modal and sectional representations make
aerosol microphysics tractable by replacing variable particle populations with
prescribed modes or size bins
\cite{binkowski1995rpm,whitby1997modal,gelbard1980sectional,vignati2004m7,bauer2008matrix}.
These representations provide a practical reduced-order baseline for
large-scale models. The prognostic variables assigned to each mode or bin
determine the retained mixing-state information. A compact prognostic state
must support prognostic advancement and preserve the size--composition
associations that control the diagnostics of interest.

Machine learning offers complementary routes toward this goal by learning
aerosol representations and process evolution from high-fidelity simulations.
Combinatorial neural networks have learned coagulation on binned aerosol
states, while graph-network simulators have represented individual particles
as nodes to learn condensation-driven evolution
\cite{wang2022coagulation,ferracina2025gnn}. Particle-node graphs retain
individual-particle structure, but their prognostic state remains a
variable-cardinality node set; direct use in a conventional Eulerian grid
model would therefore require an additional strategy for transporting and
remapping those nodes between grid cells. For three-dimensional coupling,
compactness alone is not sufficient: the reduced state should also support
transport, air-mass mixing, and emission-source increments without rebuilding
a particle list in every cell. This motivates the complementary question
addressed here: can one fixed-dimensional learned state serve both as the
prognostic coordinate for nonlinear microphysics and as a carrier of multiple
mixing-state-sensitive diagnostics?

An initial variational Encoder--Decoder compressed binned,
size--composition-resolved aerosol states into ten latent variables while
reconstructing binned quantities and climate-relevant diagnostics
\cite{saleh2025generative}. A subsequent conditional generative framework used
partial aerosol observations to produce ensembles of plausible complete
states and uncertainty-aware diagnostic estimates
\cite{saleh2025partial}. AeroMELD then extended the learned representation to
weighted particle populations and introduced a scale--shape state comprising
total number and a \modelparam{model.latent_dim}{nine}-dimensional learned shape coordinate
\cite{saleh2026aeromeld}. Unlike a sectional state whose coordinates are fixed
by prescribed bins, AeroMELD learns its coordinates from the particle-level
joint distribution, and its Decoder is trained to retain size-resolved mass
and number, CCN activation, optical properties, and immersion-freezing
behavior. It learns a compact projection optimized for these selected outputs without recovering the identity of every particle.

Sectional and AeroMELD states allocate a fixed prognostic budget differently.
A sectional model assigns that budget to predefined size--composition bins.
AeroMELD learns the population directions that are most informative for the
retained outputs. The comparison evaluates whether those learned coordinates
can match a conventional reduced representation for coagulation with far fewer
prognostic variables.

AeroMELD's unnormalized learned moment
\(\mathbf m_\phi=N\mathbf z\) is linear in population measure. Consequently,
\(N\) and the nine components of \(\mathbf m_\phi\) form an additive,
fixed-dimensional state under population mixing and emission-source
injection, while the normalized shape \(\mathbf z\) can be recovered locally
for decoding and nonlinear process integration. This property provides a
direct tracer-like interface for future three-dimensional models and poses the
central question addressed here: is the same normalized state dynamically
sufficient as microphysical processes alter the particle population?

Binary coagulation provides a focused but nontrivial test of this question.
Pairwise collisions reduce particle number, combine particle mass and
composition, and thereby change the joint size--composition distribution,
mixing state, and associated diagnostics. Matching total number alone is
insufficient: a reduced model can reproduce bulk number decline while
misrepresenting the evolving size--composition associations that control cloud
activation, optical properties, and freezing behavior. At the same time, the
bilinear Smoluchowski collision operator fixes the concentration dependence of
the dynamics, allowing that known scaling to be embedded in the model
structure. This separation suggests an operator-split coupling strategy:
transport, mixing, and emissions act on the additive extensive coordinates,
whereas a local learned operator advances the nonlinear microphysical
evolution. Coagulation therefore tests whether the learned projection is
dynamically sufficient while separating known concentration physics from the
shape-dependent closure that must be learned. More broadly, it provides a
setting in which a scientific-machine-learning model can encode a known
operator homogeneity analytically and reserve learning for the unresolved
closure.

Here we present AeroMELD-Coag, which advances a
\modelparam{model.latent_dim}{nine}-dimensional latent shape and total number
with a scale-covariant neural ordinary differential equation trained through
diagnostic-aware recursive rollouts.
Across \modelparam{split.test_trajectories}{2{,}000} held-out 48-h
trajectories, AeroMELD-Coag attains median symmetric errors of
\modelparam{results.latent.pooled_median_pct}{3.5\%} for latent shape and
\modelparam{results.number.pooled_median_pct}{0.52\%} for total number. The
decoded rollouts retain coherent evolution in size-resolved quantities, CCN
activation, optical properties, and frozen fraction. The evaluated sectional baseline
provides a second reduced-order reference. With only
\modelparam{model.total_state_dim}{ten} prognostic coordinates, AeroMELD has
lower pooled mean errors for composition, CCN activation, and frozen fraction.
The sectional model has lower pooled mean errors for binned mass and number and
for the optical coefficients. In the matched integration benchmark, the
batch-amortized per-case L40S time was more than three orders of magnitude
below both single-core CPU references. The held-out rollouts and sectional
comparison therefore support a compact prognostic representation within the
evaluated fixed-environment regime.

\hypertarget{scale-covariant-aeromeld-dynamics}{%
\section{Scale-Covariant AeroMELD Dynamics}\label{scale-covariant-aeromeld-dynamics}}

\hypertarget{aerosol-population-and-scale-shape-representation}{%
\subsection{Aerosol population and scale--shape representation}\label{aerosol-population-and-scale-shape-representation}}

Table~\ref{tab:notation} summarizes the recurring notation used in the
AeroMELD-Coag formulation. Symbols used only in individual derivations or
loss definitions are defined locally where they appear.

\begin{table}[H]
\centering
\caption{Recurring notation and units used in the AeroMELD-Coag formulation.}
\label{tab:notation}
\footnotesize
\setlength{\tabcolsep}{4pt}
\renewcommand{\arraystretch}{1.04}
\begin{tabularx}{\textwidth}{@{}>{\raggedright\arraybackslash}p{0.20\textwidth}>{\raggedright\arraybackslash}X>{\raggedright\arraybackslash}p{0.18\textwidth}@{}}
\toprule
Symbol & Definition & Unit \\
\midrule
\multicolumn{3}{@{}l}{\textit{Population and representation}} \\
\addlinespace[2pt]
\(t\) &
Physical time. &
\(\mathrm{h}\) \\
\(p,\;N_p(t)\) &
Computational-particle index and number of computational particles at time \(t\). &
\(1\) \\
\(\mathcal X,\;x_{p,t}\) &
Particle composition-state space and the vector of \modelparam{data.species_count}{15} species masses carried by particle \(p\); here \(\mathcal X=\mathbb R_+^{\modelparam{data.species_count}{15}}\). &
Each component: \(\mathrm{kg}\) \\
\(w_{p,t},\;\mu_t\) &
Number concentration represented by particle \(p\), and the corresponding weighted population measure. &
\(\mathrm{m^{-3}}\) \\
\(N_t,\;\widehat\mu_t\) &
Total physical number concentration \(N_t=\mu_t(\mathcal X)\), and normalized particle-state distribution \(\widehat\mu_t=\mu_t/N_t\). For \(\widehat\mu_t\), the hat denotes normalization; elsewhere, a hat marks a model-produced quantity. &
\(N_t:\ \mathrm{m^{-3}};\quad \widehat\mu_t:\ 1\) \\
\(\boldsymbol\phi,\;\mathbf m_{\phi,t}\) &
Frozen learned particle feature map and its unnormalized population moment, \(\mathbf m_{\phi,t}=\int\boldsymbol\phi\,d\mu_t\). &
\(\boldsymbol\phi:\ 1;\quad \mathbf m_{\phi,t}:\ \mathrm{m^{-3}}\) \\
\(\mathbf z_t,\;d_z\) &
Latent-shape coordinate and its dimension; \(d_z=\modelparam{model.latent_dim}{9}\). &
\(1\) \\
\(N_{\mathrm{ref}},\;s_t\) &
Training-partition reference number concentration and dimensionless scale \(s_t=N_t/N_{\mathrm{ref}}\). &
\(N_{\mathrm{ref}}:\ \mathrm{m^{-3}};\quad s_t:\ 1\) \\
\addlinespace[2pt]
\multicolumn{3}{@{}l}{\textit{Decoding and physical outputs}} \\
\addlinespace[2pt]
\(D_u(\mathbf z),\;\mathbf u\) &
Frozen neural Decoder and its standardized, transformed model-space output. &
\(1\) \\
\(\mathcal D(\mathbf z,N),\;\mathbf q\) &
Complete physical decoding map and the physical-space output vector reconstructed or evaluated. &
Block-dependent \\
\(M_b,\;P_{ab},\;N_b\) &
Total mass concentration in size bin \(b\), fraction of species \(a\) in size bin \(b\), and number concentration in size bin \(b\). &
\(\mathrm{kg\,m^{-3}};\quad 1;\quad \mathrm{m^{-3}}\) \\
CCN-to-CN ratio; scattering and absorption; frozen fraction &
Remaining decoded diagnostic blocks. &
\(1;\quad \mathrm{m^{-1}};\quad 1\) \\
\addlinespace[2pt]
\multicolumn{3}{@{}l}{\textit{Coagulation dynamics and transformations}} \\
\addlinespace[2pt]
\(K(x,y),\;x\oplus y\) &
Symmetric binary-coagulation kernel and the particle state produced by merging states \(x\) and \(y\). &
\(K:\ \mathrm{m^3\,h^{-1}};\quad x\oplus y:\ \mathrm{kg}\) per component \\
\(F(\mathbf z),\;\mathbf G(\mathbf z)\) &
Physical scale- and latent-shape-tendency coefficient fields in \(ds/dt=s^2F(\mathbf z)\) and \(d\mathbf z/dt=s\mathbf G(\mathbf z)\). &
\(\mathrm{h^{-1}}\) \\
\(F_\theta(\mathbf z),\;\mathbf G_\theta(\mathbf z)\) &
Learned scale- and shape-rate coefficient heads in zero-centered transformed-rate coordinates. &
\(1\) \\
\(\tau,\;\Delta t\) &
Collision-exposure time defined by \(d\tau/dt=s\), and the numerical
integration step. &
\(\mathrm{h}\) \\
\(T_{\dot z},T_{\dot s};\;\overline T_{\dot z},\overline T_{\dot s}\) &
Fitted rate transformations and their zero-centered forms, for which zero physical rate maps to zero. &
\(1\) \\
\addlinespace[2pt]
\multicolumn{3}{@{}l}{\textit{Notation conventions}} \\
\addlinespace[2pt]
\(\dot{(\,\cdot\,)}\) &
Derivative with respect to physical time \(t\). &
Underlying unit multiplied by \(\mathrm{h^{-1}}\) \\
\((\,\cdot\,)^\star\) &
Reference quantity. Latent references are obtained from the frozen Encoder applied to the PartMC trajectory; physical references are particle-resolved values. &
Same as underlying quantity \\
\(\widehat{(\,\cdot\,)}\) &
Model-produced quantity, including a latent-dynamics prediction or Decoder reconstruction; \(\widehat\mu\) is the stated normalization exception. &
Same as underlying quantity \\
\(\widetilde{(\,\cdot\,)}\) &
Training quantity modified by the initial latent perturbation or its corresponding target correction. &
Same as underlying quantity \\
\(\theta,\;\operatorname{stopgrad}[\cdot]\) &
Trainable latent-dynamics parameters, and the forward-pass identity whose argument is detached from gradient propagation. &
\(\theta:\ 1;\quad \operatorname{stopgrad}[\cdot]\): same as argument \\
\bottomrule
\end{tabularx}
\end{table}

At time \(t\), a particle-resolved aerosol population is represented by a weighted measure
\begin{equation}
\begin{aligned}
\mu_t&=\sum_{p=1}^{N_p(t)}w_{p,t}\,\delta_{x_{p,t}},\\
N_t&:=\mu_t(\mathcal X)=\sum_{p=1}^{N_p(t)}w_{p,t},\\
\widehat\mu_t&:=\frac{\mu_t}{N_t}.
\end{aligned}
\label{eq:population-measure}
\end{equation}
For the present data, \(\mathcal X=\mathbb R_+^{\modelparam{data.species_count}{15}}\), and the particle state is
\(x_{p,t}=(m_{p,t,1},\ldots,m_{p,t,\modelparam{data.species_count}{15}})\in\mathcal X\),
where the components are the masses of the \modelparam{data.species_count}{15} aerosol species carried by
computational particle \(p\). The raw population state is therefore the
variable-cardinality pair
\(X_t\in\mathbb R_+^{N_p(t)\times\modelparam{data.species_count}{15}}\) and
\(\mathbf w_t\in\mathbb R_+^{N_p(t)}\), where row \(p\) of \(X_t\) is
\(x_{p,t}\) and \(w_{p,t}\) is the represented number concentration of that
computational particle. The formulation uses the weighted measure in Eq.~\eqref{eq:population-measure} as the mathematical population object instead of flattening it into a fixed \(15N_p(t)\)-component vector.
Here \(N_p(t)\) is the variable number of computational particles, \(N_t\)
is the total physical number concentration, and \(\widehat\mu_t\) is the
normalized particle-state distribution whenever \(N_t>0\).

The frozen AeroMELD Encoder applies a learned particle feature map \(\boldsymbol\phi:\mathcal X\rightarrow\mathbb R^{d_z}\) and aggregates those features with the normalized particle weights \cite{saleh2026aeromeld,zaheer2017deepsets}:
\begin{equation}
\begin{aligned}
\mathbf z_t
&=\int_{\mathcal X}\boldsymbol\phi(x)\,d\widehat\mu_t(x),\\
\mathbf m_{\phi,t}
&:=\int_{\mathcal X}\boldsymbol\phi(x)\,d\mu_t(x)
=N_t\mathbf z_t,
\qquad d_z=\modelparam{model.latent_dim}{9}.
\end{aligned}
\label{eq:latent-shape}
\end{equation}
We index the components of \(\mathbf z_t\) as \(z_{j,t}\), with
\(j=1,\ldots,d_z\); figure legends abbreviate these components as
\(z_1,\ldots,z_9\).
The particle feature map may be nonlinear in \(x\), but the unnormalized moment \(\mathbf m_\phi\) is linear in population measure:
\begin{equation}
\mathbf m_\phi(\alpha\mu_1+\beta\mu_2)
=\alpha\mathbf m_\phi(\mu_1)+\beta\mathbf m_\phi(\mu_2),
\qquad \alpha,\beta\ge0.
\label{eq:linear-population-moment}
\end{equation}
The normalized latent shape \(\mathbf z=\mathbf m_\phi/N\) is therefore a barycenter of learned particle features, whereas \(N\) carries population scale. The complete reduced aerosol state used here has \modelparam{model.total_state_dim}{ten} coordinates: \(N\) and the \modelparam{model.latent_dim}{nine} components of \(\mathbf z\).

The frozen Decoder's neural map
\(D_u:\mathbb R^{\modelparam{model.latent_dim}{9}}\rightarrow\mathbb R^{\modelparam{decoder.output_dim}{550}}\) sends \(\mathbf z\) to a
transformed model-space vector \(\mathbf u\) with \modelparam{decoder.block_count}{seven} fixed ordered blocks:
total binned mass \(M_b\in\mathbb R^{\modelparam{decoder.mass_bins}{20}}\), normalized size-resolved
composition \(P_{ab}\in\mathbb R^{\modelparam{decoder.mass_bins}{20}\times\modelparam{decoder.composition_species}{15}}\), binned number
\(N_b\in\mathbb R^{\modelparam{decoder.mass_bins}{20}}\), CCN activation in \(\mathbb R^{\modelparam{decoder.ccn_points}{100}}\), scattering
and absorption in \(\mathbb R^{\modelparam{decoder.optical_points}{5}}\) each, and frozen fraction in
\(\mathbb R^{\modelparam{decoder.frozen_points}{100}}\). These dimensions sum to
\(\modelparam{decoder.mass_bins}{20}+\modelparam{decoder.mass_bins}{20}\times\modelparam{decoder.composition_species}{15}+\modelparam{decoder.mass_bins}{20}+\modelparam{decoder.ccn_points}{100}+\modelparam{decoder.optical_points}{5}+\modelparam{decoder.optical_points}{5}+\modelparam{decoder.frozen_points}{100}=\modelparam{decoder.output_dim}{550}\). Block-specific inverse
transformations, using \(N\) for the scale-dependent blocks, give the complete
physical reconstruction
\begin{equation}
\widehat{\mathbf q}_t=\mathcal D(\mathbf z_t,N_t).
\label{eq:diagnostic-decoder}
\end{equation}
Here \(\widehat{\mathbf q}_t\) denotes a physical-space output reconstructed by
the complete Decoder. The hat marks a model-produced reconstruction whose input \((\mathbf z_t,N_t)\) may be either an encoded reference state or a predicted rollout state.

The Encoder and Decoder define the representation and remain fixed in this
study. The new learning problem is to advance \((\mathbf z,N)\) under
coagulation and then use the same Decoder to diagnose the evolving particle
population.

The linearity of the unnormalized learned moment characterizes the representation and exposes population scale explicitly. Multiplying every
particle weight by the same positive factor changes \(N\) and
\(\mathbf m_\phi\) by that factor while leaving the normalized distribution
\(\widehat\mu\) and latent shape \(\mathbf z\) unchanged. The resulting scale--shape coordinates allow the known concentration homogeneity of binary coagulation to be imposed analytically, so the neural dynamics model can focus on the remaining shape dependence.

\hypertarget{scale-covariant-coagulation-dynamics}{%
\subsection{Scale-covariant coagulation dynamics}\label{scale-covariant-coagulation-dynamics}}

We express population scale with the dimensionless coordinate
\begin{equation}
s_t=\frac{N_t}{N_{\mathrm{ref}}},
\label{eq:number-scale}
\end{equation}
where \(N_{\mathrm{ref}}\) is the arithmetic mean of total number concentration
over valid training states.

Before deriving the dynamics, consider a pure concentration rescaling in which
every computational-particle weight is multiplied by the same factor \(c>0\):
\begin{equation}
\mu^{(c)}=c\mu,
\qquad
N^{(c)}=cN,
\qquad
\widehat\mu^{(c)}=\widehat\mu,
\qquad
\mathbf z^{(c)}=\mathbf z.
\label{eq:concentration-rescaling}
\end{equation}
The two population states therefore have the same normalized particle-state
distribution and the same latent shape, but different number concentrations.
Binary coagulation is bilinear in the population measure because every event
is formed from a particle pair. Consequently, unnormalized population
tendencies acquire two powers of concentration. The normalized latent-shape
tendency has a different scaling because dividing the learned moment by total
number removes one of those powers.

For binary coagulation, let \(K(x,y)\ge0\) be the symmetric collision kernel and
let \(x\oplus y\) denote the merged particle state. For any test function
\(\psi:\mathcal X\rightarrow\mathbb R\), the weak Smoluchowski equation is
\begin{equation}
\begin{aligned}
\frac{d}{dt}\int_{\mathcal X}\psi(x)\,d\mu_t(x)
&=\frac12\iint_{\mathcal X\times\mathcal X}K(x,y)
\left[\psi(x\oplus y)-\psi(x)-\psi(y)\right]
d\mu_t(x)d\mu_t(y).
\end{aligned}
\label{eq:smoluchowski-weak}
\end{equation}
The bracket records one merged-particle gain and two parent-particle losses.
Choosing \(\psi=1\) isolates total number, whereas choosing
\(\psi=\boldsymbol\phi\) componentwise evolves the unnormalized learned moment.
The following theorem shows explicitly how the two tendencies combine in the
normalized latent coordinate.

\begin{theorem}[Scale-covariant latent dynamics of binary coagulation]
\label{thm:scale-covariant-dynamics}
Let a population evolve according to Eq.~\eqref{eq:smoluchowski-weak}, with
\(N\), \(\widehat\mu\), \(\mathbf m_\phi\), and \(\mathbf z\) defined in
Section~\ref{aerosol-population-and-scale-shape-representation}. Define the
normalized-population functionals
\begin{equation}
\begin{aligned}
 a(\widehat\mu)
&:=-\frac12\iint K(x,y)\,
d\widehat\mu(x)d\widehat\mu(y),\\
\mathbf b(\widehat\mu)
&:=\frac12\iint K(x,y)
\left[
\boldsymbol\phi(x\oplus y)
-\boldsymbol\phi(x)
-\boldsymbol\phi(y)
\right]
d\widehat\mu(x)d\widehat\mu(y),\\
\mathbf g(\widehat\mu)
&:=\mathbf b(\widehat\mu)-\mathbf z\,a(\widehat\mu).
\end{aligned}
\label{eq:normalized-tendency-functionals}
\end{equation}
Then
\begin{equation}
\frac{dN}{dt}=N^2a(\widehat\mu),
\qquad
\frac{d\mathbf m_\phi}{dt}=N^2\mathbf b(\widehat\mu),
\qquad
\frac{d\mathbf z}{dt}=N\mathbf g(\widehat\mu).
\label{eq:scale-shape-functionals}
\end{equation}
If the normalized-population tendencies close through the latent coordinate on
the modeled manifold, so that
\(a(\widehat\mu)=\widetilde a(\mathbf z)\) and
\(\mathbf g(\widehat\mu)=\widetilde{\mathbf g}(\mathbf z)\), then
\begin{equation}
\boxed{\frac{ds}{dt}=s^2F(\mathbf z)},
\qquad
\boxed{\frac{d\mathbf z}{dt}=s\mathbf G(\mathbf z)},
\label{eq:scale-covariant-dynamics}
\end{equation}
where
\(F(\mathbf z)=N_{\mathrm{ref}}\widetilde a(\mathbf z)\) and
\(\mathbf G(\mathbf z)=N_{\mathrm{ref}}\widetilde{\mathbf g}(\mathbf z)\).
\end{theorem}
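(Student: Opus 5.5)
The proof substitutes two test functions into the weak Smoluchowski equation~\eqref{eq:smoluchowski-weak}, pulls out the concentration scale using $\mu=N\widehat\mu$, and then applies the quotient rule to $\mathbf z=\mathbf m_\phi/N$.

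\emph{Number and moment tendencies.} Take $\psi\equiv1$. The left side of Eq.~\eqref{eq:smoluchowski-weak} is $dN/dt$, and the bracket is $1-1-1=-1$, so
\[
\frac{dN}{dt}=-\tfrac12\iint K\,d\mu\,d\mu .
\]
Write $d\mu(x)\,d\mu(y)=N^2\,d\widehat\mu(x)\,d\widehat\mu(y)$, which is valid whenever $N>0$ because the product measure is bilinear. This gives $dN/dt=N^2a(\widehat\mu)$. Next take $\psi=\phi_j$ for each component $j$. By Eq.~\eqref{eq:latent-shape} the left side is $d m_{\phi,j}/dt$, and the same factorization gives $d\mathbf m_\phi/dt=N^2\mathbf b(\widehat\mu)$.

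\emph{Shape tendency.} Since $\mathbf z=\mathbf m_\phi/N$ and $N>0$,
\[
\frac{d\mathbf z}{dt}=\frac{1}{N}\frac{d\mathbf m_\phi}{dt}-\frac{\mathbf m_\phi}{N^2}\frac{dN}{dt}
=N\mathbf b-\mathbf z\,N a=N\mathbf g(\widehat\mu).
\]
This proves Eq.~\eqref{eq:scale-shape-functionals}. Under the closure hypothesis, substitute $N=N_{\mathrm{ref}}s$. Then $ds/dt=N_{\mathrm{ref}}^{-1}\,dN/dt=N_{\mathrm{ref}}s^2\widetilde a(\mathbf z)$, and $d\mathbf z/dt=N_{\mathrm{ref}}s\,\widetilde{\mathbf g}(\mathbf z)$. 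These are Eq.~\eqref{eq:scale-covariant-dynamics} with the stated $F$ and $\mathbf G$.

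\emph{Main obstacle: justifying the differentiations.} The algebra is routine; the real work is confirming that each step is well defined. First, the integrals in $a$ and $\mathbf b$ must be finite. For the finitely weighted measures of Eq.~\eqref{eq:population-measure} they are finite sums, so this is immediate. For general measures I would assume $K$ and $K\,|\boldsymbol\phi|$ are integrable against $\widehat\mu\otimes\widehat\mu$. Second, $N_t$ must stay positive so the quotient rule applies. Third, $t\mapsto N_t$ and $t\mapsto\mathbf m_{\phi,t}$ must be differentiable, which Eq.~\eqref{eq:smoluchowski-weak} asserts directly for $\psi=1$ and $\psi=\phi_j$. I would state these as standing regularity assumptions rather than prove them.
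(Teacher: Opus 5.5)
Your proposal is correct and follows the paper's proof step for step. Both arguments set \(\psi=1\) and \(\psi=\boldsymbol\phi\) in the weak Smoluchowski equation, factor out \(N^2\) via \(\mu=N\widehat\mu\), apply the quotient rule to \(\mathbf z=\mathbf m_\phi/N\), and substitute \(N=N_{\mathrm{ref}}s\) under the closure hypothesis. Your explicit regularity assumptions (finite integrals, \(N>0\), differentiability) are left implicit in the paper but do not change the argument.
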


\begin{proof}
\textit{Total-number tendency.}
Setting \(\psi=1\) in Eq.~\eqref{eq:smoluchowski-weak} makes the bracket equal
to \(-1\). Using \(\mu=N\widehat\mu\) in both integration factors gives
\[
\frac{dN}{dt}
=-\frac12\iint K(x,y)\,d\mu(x)d\mu(y)
=-\frac{N^2}{2}\iint K(x,y)\,
d\widehat\mu(x)d\widehat\mu(y)
=N^2a(\widehat\mu).
\]

\textit{Unnormalized learned-moment tendency.}
Setting \(\psi=\boldsymbol\phi\) componentwise and again substituting
\(\mu=N\widehat\mu\) gives
\[
\frac{d\mathbf m_\phi}{dt}
=\frac{N^2}{2}\iint K(x,y)
\left[
\boldsymbol\phi(x\oplus y)
-\boldsymbol\phi(x)
-\boldsymbol\phi(y)
\right]
d\widehat\mu(x)d\widehat\mu(y)
=N^2\mathbf b(\widehat\mu).
\]

\textit{Normalized latent-shape tendency.}
Because \(\mathbf z=\mathbf m_\phi/N\), the quotient rule combines the two
unnormalized tendencies:
\[
\frac{d\mathbf z}{dt}
=\frac1N\frac{d\mathbf m_\phi}{dt}
-\frac{\mathbf m_\phi}{N^2}\frac{dN}{dt}
=N\left[
\mathbf b(\widehat\mu)-\mathbf z\,a(\widehat\mu)
\right]
=N\mathbf g(\widehat\mu).
\]
Thus one concentration factor remains in the physical-time shape tendency
after normalization removes the other.

\textit{Closed scale--shape dynamics.}
Under closure through \(\mathbf z\), substitute \(N=N_{\mathrm{ref}}s\) into
the number and latent-shape tendencies. Dividing the number equation by
\(N_{\mathrm{ref}}\) and using the definitions of \(F\) and \(\mathbf G\)
gives Eq.~\eqref{eq:scale-covariant-dynamics}.
\end{proof}

The derivation separates exact structure from learned approximation. The
quadratic scale tendency and linear physical-time shape speed follow exactly
from the bilinear collision operator and the normalized scale--shape
coordinates. Closure through the nine-dimensional \(\mathbf z\) remains the reduced-model assumption tested by the held-out rollouts. If the
collision kernel depends on a time-varying environment, the same concentration
powers remain valid, but the closed coefficient fields generally also depend
on that environment.

At two corresponding population states related by the rescaling in
Eq.~\eqref{eq:concentration-rescaling}, the local vector field therefore
transforms as
\begin{equation}
N\mapsto cN,
\qquad
\mathbf z\mapsto\mathbf z,
\qquad
\frac{dN}{dt}\mapsto c^2\frac{dN}{dt},
\qquad
\frac{d\mathbf z}{dt}\mapsto c\frac{d\mathbf z}{dt}.
\label{eq:local-scale-covariance}
\end{equation}
The result is covariant rather than invariant: concentration rescaling leaves
the latent-shape coordinate unchanged but changes its physical-time velocity
by a prescribed factor.

The covariance structure determines what the neural model must learn. An unrestricted
neural ordinary differential equation taking \((\mathbf z,s)\) as input would
have to infer both the known concentration homogeneity and the unknown shape
dependence from finite training data. Such a model could interpolate the sampled concentration range yet violate the required behavior at substantially different scales. The structured form supplies the exact factors \(s^2\) and \(s\) analytically and restricts the network to learning the remaining functions of latent shape. The concentration-scaling form is therefore
enforced for every positive \(s\), including scales not represented during
training, provided that the binary-coagulation assumptions remain applicable.
Accuracy for unseen latent shapes, environmental conditions, or processes remains an empirical question beyond this structural extrapolation.

The same scientific-machine-learning design principle operates at both stages
of AeroMELD. The Encoder treats the variable-cardinality \(N_p\times15\) particle matrix as a set-structured input. Instead, a shared particle feature map and normalized
number-weighted aggregation encode permutation invariance and expose a
population moment that is linear in the measure. At the dynamics stage, the model encodes the known binary-collision scale dependence and learns the unresolved shape-dependent closure.

\hypertarget{latent-coagulation-paths}{%
\subsection{Latent coagulation paths}\label{latent-coagulation-paths}}

We write \(\dot{\mathbf z}:=d\mathbf z/dt\) and \(\dot s:=ds/dt\). Equation~\eqref{eq:scale-covariant-dynamics} separates the local direction of latent-shape evolution from the concentration-dependent speed.

\begin{corollary}[Concentration reparameterizes latent coagulation trajectories]
\label{cor:latent-direction}
At a state with \(s>0\) and \(\mathbf G(\mathbf z)\ne\mathbf0\),
\begin{equation}
\frac{\dot{\mathbf z}}{\|\dot{\mathbf z}\|_2}
=\frac{\mathbf G(\mathbf z)}{\|\mathbf G(\mathbf z)\|_2}.
\label{eq:latent-direction}
\end{equation}
Defining collision-exposure time by
\begin{equation}
\tau(t)=\int_0^t s(u)\,du
\label{eq:collision-exposure}
\end{equation}
gives
\begin{equation}
\frac{d\mathbf z}{d\tau}=\mathbf G(\mathbf z).
\label{eq:exposure-dynamics}
\end{equation}
Consequently, where the autonomous reduced equation has a unique solution, populations that start from the same latent state \(\mathbf z_0\) follow the same latent path in collision-exposure time; positive concentration changes the mapping between that path and physical time.
\end{corollary}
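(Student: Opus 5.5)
The plan is to start from the closed scale--shape system in Eq.~\eqref{eq:scale-covariant-dynamics} of Theorem~\ref{thm:scale-covariant-dynamics} and prove the three claims of the corollary in turn.

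\textbf{Direction.} At a state with \(s>0\), we have \(\dot{\mathbf z}=s\mathbf G(\mathbf z)\). Taking norms gives \(\|\dot{\mathbf z}\|_2=s\|\mathbf G(\mathbf z)\|_2\), because \(s>0\). This norm is nonzero whenever \(\mathbf G(\mathbf z)\ne\mathbf 0\). Dividing \(\dot{\mathbf z}\) by it, the positive factor \(s\) cancels, which yields Eq.~\eqref{eq:latent-direction}.

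\textbf{Reparameterization.} Next I would define \(\tau(t)\) by Eq.~\eqref{eq:collision-exposure}. Since \(s\) is continuous along a solution, the fundamental theorem of calculus gives \(d\tau/dt=s(t)\). To keep \(s\) positive on the whole interval, I would use the scalar equation \(\dot s=s^2F(\mathbf z)\). Along a solution this is linear in \(s\) with the continuous coefficient \(sF(\mathbf z)\), so
\[
s(t)=s(0)\exp\!\Big(\int_0^t s(u)F(\mathbf z(u))\,du\Big)>0 .
\]
Alternatively, one can argue that \(s\equiv 0\) is itself a solution, so by uniqueness a solution starting at \(s(0)>0\) can never reach zero.

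Positivity of \(s\) makes \(\tau\) a strictly increasing \(C^1\) map with a \(C^1\) inverse \(t(\tau)\). The chain rule then gives
\[
\frac{d\mathbf z}{d\tau}=\frac{\dot{\mathbf z}}{\dot\tau}=\frac{s\mathbf G(\mathbf z)}{s}=\mathbf G(\mathbf z),
\]
which is Eq.~\eqref{eq:exposure-dynamics}.

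\textbf{Same latent path.} Consider two populations with \(\mathbf z(0)=\mathbf z_0\) but possibly different scale histories \(s_1,s_2\). Each reparameterized curve \(\tau\mapsto\mathbf z_i(t_i(\tau))\) solves the same autonomous equation \(d\mathbf z/d\tau=\mathbf G(\mathbf z)\) with the same initial value. Wherever that equation has a unique solution, for instance when \(\mathbf G\) is locally Lipschitz, the two curves coincide on the common \(\tau\)-interval. Only the maps \(t\mapsto\tau_i(t)\) differ, and these carry all of the concentration dependence.

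\textbf{Main obstacle.} The delicate point is not the algebra but making the reparameterization legitimate. This requires \(s>0\) throughout the interval, not just at one instant, so that \(\tau\) is invertible. It also requires the \(\tau\)-intervals to be compared correctly: if \(s\) decays, \(\tau\) may saturate at a finite limit, so the two populations may traverse different portions of the same path. I would handle this with the exponential representation of \(s\) above, and by stating the conclusion on the overlap of the reachable \(\tau\)-ranges.
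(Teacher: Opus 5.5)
Your proposal is correct and follows the paper's proof: you cancel the positive factor \(s\) in the normalized tangent \(\dot{\mathbf z}/\|\dot{\mathbf z}\|_2\), then apply the chain rule with \(d\tau/dt=s\). You also supply details the paper leaves implicit, and they are sound: positivity of \(s\) along the whole solution via the exponential representation, invertibility of \(\tau\), and the explicit uniqueness argument on the common \(\tau\)-interval for the ``same latent path'' consequence.
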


\begin{proof}
For \(s>0\) and \(\mathbf G(\mathbf z)\ne\mathbf0\), substituting
\(\dot{\mathbf z}=s\mathbf G(\mathbf z)\) into the normalized tangent cancels
the positive scalar \(s\), giving Eq.~\eqref{eq:latent-direction}. Because
\(d\tau/dt=s>0\), the chain rule then gives
\(d\mathbf z/d\tau=\mathbf G(\mathbf z)\).
\end{proof}

Concentration continues to control physical-time evolution by setting how rapidly the latent path is traversed. After reparameterization, the path-generating field \(\mathbf G(\mathbf z)\) is concentration-independent, whereas \(d\mathbf z/dt\) retains its concentration dependence.

The event-level form of the same field is
\begin{equation}
\begin{aligned}
\frac{d\mathbf z}{dt}
&=\frac{N}{2}\iint K(x,y)
\left[\boldsymbol\phi(x\oplus y)-\boldsymbol\phi(x)-\boldsymbol\phi(y)+\mathbf z\right]
d\widehat\mu(x)d\widehat\mu(y).
\end{aligned}
\label{eq:event-latent-tangent}
\end{equation}
The \(+\mathbf z\) term is the quotient-rule correction produced when the unnormalized moment \(\mathbf m_\phi=N\mathbf z\) is converted to the normalized barycenter. Thus \(\mathbf G(\mathbf z)=d\mathbf z/d\tau\) is a coordinate-dependent latent tendency that aggregates coagulation events in one fixed learned feature basis.

Two supporting properties are recorded in Appendix~\ref{supporting-scale-shape-results}. While collisions remain active, strictly decreasing total number prevents recurrence of the complete state \((N,\mathbf z)\). Under Lipschitz particle features, Wasserstein continuity of the normalized population path transfers to continuity of \(\mathbf z(t)\). These properties support a regular latent trajectory. Small curvature, injectivity, and exact closure remain unresolved.

The corollary motivates the architecture in Section~\ref{structured-latent-neural-ode}: the known factors \(s\) and \(s^2\) enter analytically, leaving the neural network to learn only the shape-dependent coefficients.

\hypertarget{structured-latent-neural-ode}{%
\subsection{Structured latent neural ODE}\label{structured-latent-neural-ode}}

The latent-dynamics network receives \(\mathbf z\in\mathbb R^{\modelparam{model.latent_dim}{9}}\) and uses \modelparam{dynamics.hidden_layers}{two} fully connected hidden layers of width \modelparam{dynamics.hidden_width}{64} with ReLU activations and dropout probability \modelparam{dynamics.dropout}{0.1}. A joint \modelparam{model.total_state_dim}{ten}-component affine output layer is split into a \modelparam{model.latent_dim}{nine}-component shape-rate coefficient \(\mathbf G_\theta(\mathbf z)\) and a scalar scale-rate coefficient \(F_\theta(\mathbf z)\). Including biases, the network has \modelparam{dynamics.parameter_count}{5{,}450} trainable parameters. Applying the analytic scale factors to these outputs defines a structured autonomous neural ordinary differential equation for \((\mathbf z,s)\) \cite{chen2018neuralode}. The updated scale--shape state is the complete learned rollout state; no additional learned recurrent memory is carried between steps.

Adapting AeroMELD's preprocessing and post-processing strategy of fitted power transformations and standardization \cite{saleh2026aeromeld}, the continuous-time physical rates are compared and predicted through fitted signed-power-plus-standardization transformations. For a scalar or vector rate \(r\), define the elementwise signed-power map \(S_\alpha(r)=\operatorname{sign}(r)|r|^\alpha\). The transformations
\begin{equation}
\begin{aligned}
T_{\dot z}(\dot{\mathbf z})
&=\frac{S_{\alpha_{\dot z}}(\dot{\mathbf z})-\boldsymbol\mu_{\dot z}}
{\boldsymbol\sigma_{\dot z}},\\
T_{\dot s}(\dot s)
&=\frac{S_{\alpha_{\dot s}}(\dot s)-\mu_{\dot s}}
{\sigma_{\dot s}},
\qquad \alpha_{\dot z}=\alpha_{\dot s}=\modelparam{transform.latent_rate_power}{1}.
\end{aligned}
\label{eq:rate-transformations}
\end{equation}
use componentwise location and scale statistics fitted on the training partition. Both rate families use hours as the physical time unit, and the same fitted parameters are used for training, evaluation, and inverse recovery. We remove the transformed coordinate of zero physical rate by defining
\begin{equation}
\begin{aligned}
\overline T_{\dot z}(\dot{\mathbf z})
&:=T_{\dot z}(\dot{\mathbf z})-T_{\dot z}(\mathbf0),
&
\overline T_{\dot z}^{-1}(\mathbf u)
&:=T_{\dot z}^{-1}\!\left[\mathbf u+T_{\dot z}(\mathbf0)\right],\\
\overline T_{\dot s}(\dot s)
&:=T_{\dot s}(\dot s)-T_{\dot s}(0),
&
\overline T_{\dot s}^{-1}(u)
&:=T_{\dot s}^{-1}\!\left[u+T_{\dot s}(0)\right].
\end{aligned}
\label{eq:centered-rate-transformations}
\end{equation}

Because the selected exponents satisfy
\(\alpha_{\dot z}=\alpha_{\dot s}=\modelparam{transform.latent_rate_power}{1}\), the signed-power maps reduce to the
identity. Subtracting the transformed value of zero therefore gives
\[
\overline T_{\dot z}(\dot{\mathbf z})
=\dot{\mathbf z}/\boldsymbol\sigma_{\dot z},
\qquad
\overline T_{\dot s}(\dot s)
=\dot s/\sigma_{\dot s},
\]
with componentwise division for the latent rate. These zero-centered
transformations are linear maps through the origin, so multiplication by
\(s\) or \(s^2\) commutes with the transformation. The analytic factors in
Eq.~\eqref{eq:structured-rate-heads} therefore preserve the physical
concentration homogeneity in the transformed coordinates used by the network.

At a predicted state \((\widehat{\mathbf z},\widehat s)\), the joint network output is split to return the zero-centered transformed-rate coordinates,
\begin{equation}
\begin{aligned}
\widehat{\mathbf u}_{\dot z}(\widehat s,\widehat{\mathbf z})
&=\widehat s\,\mathbf G_\theta(\widehat{\mathbf z}),
&
\widehat u_{\dot s}(\widehat s,\widehat{\mathbf z})
&=\widehat s^2F_\theta(\widehat{\mathbf z}),\\
\widehat{\dot{\mathbf z}}
&=\overline T_{\dot z}^{-1}(\widehat{\mathbf u}_{\dot z}),
&
\widehat{\dot s}
&=\overline T_{\dot s}^{-1}(\widehat u_{\dot s}).
\end{aligned}
\label{eq:structured-rate-heads}
\end{equation}
The first row is the direct model output in transformed coordinates; the second row restores the physical rates supplied to the integrator.

For training and default evaluation, we integrate this neural ODE with a
fixed-step forward-Euler scheme. This first-order, single-stage update requires
one learned right-hand-side evaluation per step. The implementation persists
physical total number and evaluates
\(\widehat s_k=\widehat N_k/N_{\mathrm{ref}}\) at each step:
\begin{equation}
\begin{aligned}
\widehat{\mathbf z}_{k+1}
&=\widehat{\mathbf z}_k+\Delta t\,\widehat{\dot{\mathbf z}}_k,\\
\widehat N_{k+1}
&=\widehat N_k+\Delta t\,N_{\mathrm{ref}}\widehat{\dot s}_k.
\end{aligned}
\label{eq:euler-step}
\end{equation}
The learned right-hand side returns continuous-time physical rates independently of the discretization; the integration formula and \(\Delta t\) enter only through the numerical solver. Section~\ref{integration-step-robustness}
tests fixed-step Euler integration from 1 s to 8 h and, at the deliberately
coarse \(\Delta t=8\) h stress test, replaces Euler with classical
fourth-order Runge--Kutta (RK4) without retraining. RK4 provides a higher-order comparison at the same coarse step and uses four right-hand-side evaluations per step, compared with one for Euler.

Figure~\ref{fig:method_overview} summarizes the recursive
encode--advance--decode pathway and expands one structured neural-ODE update.
The inset expresses concentration through \(s=N/N_{\mathrm{ref}}\); the
implementation persists physical total number through the equivalent update
in Eq.~\eqref{eq:euler-step}.

\begin{figure}[!htbp]
\centering
\includegraphics[width=\textwidth]{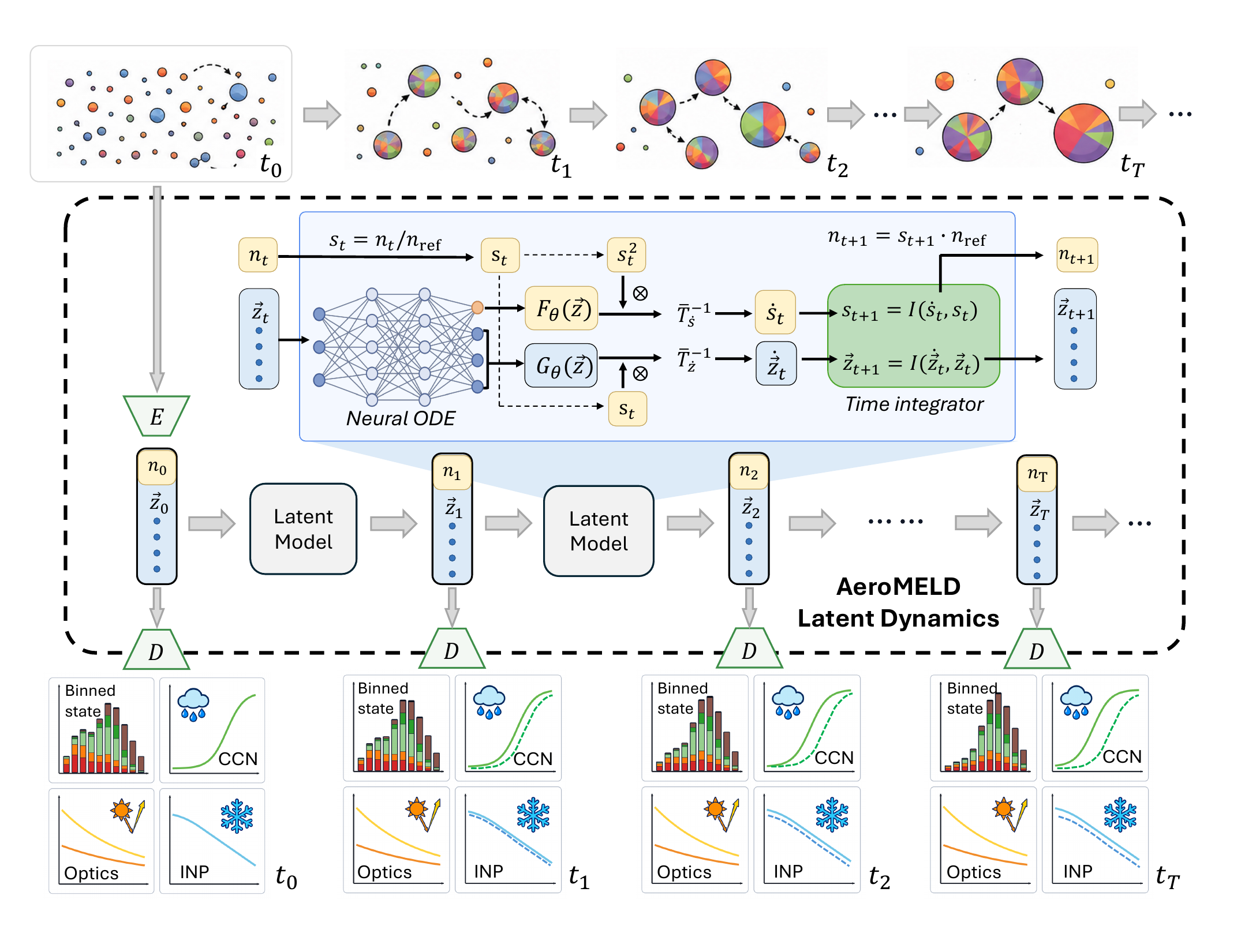}
\caption{Architecture and rollout workflow of AeroMELD-Coag. The upper
sequence schematically illustrates aerosol-population evolution by
coagulation. The middle dashed box contains the core framework for AeroMELD
latent dynamics: the initial particle population supplies total number
\(N_0\) and is mapped by the frozen Encoder to latent shape \(\mathbf z_0\),
after which the latent model recursively advances \((N_t,\mathbf z_t)\). The
lower sequence shows the binned aerosol state, the
cloud-condensation-nuclei (CCN) activation spectrum, the spectra of the volume
scattering and absorption coefficients, and the frozen-fraction spectrum
decoded from \((N_t,\mathbf z_t)\) at each output time.
In the diagram, \(E\) and \(D\) denote the frozen AeroMELD Encoder and Decoder
based on the framework detailed by Saleh~et~al.~(2026)~
\cite{saleh2026aeromeld}. The inset shows one
scale-covariant update: with \(s_t=N_t/N_{\mathrm{ref}}\), the analytic factors
\(s_t\) and \(s_t^2\) impose the concentration dependence on
\(\mathbf G_\theta(\mathbf z_t)\) and \(F_\theta(\mathbf z_t)\), respectively,
before inverse rate transformation and numerical integration.}
\label{fig:method_overview}
\end{figure}

\hypertarget{frozen-decoder-reference-targets}{%
\subsection{Frozen-decoder reference targets}\label{frozen-decoder-reference-targets}}

The frozen Encoder--Decoder defines a fixed representation manifold, and the latent-dynamics model is trained to follow the reference trajectory on that manifold. A target formed directly from raw physical outputs retains the frozen Decoder's static reconstruction residual and can therefore pressure the dynamics away from the reference latent endpoint. Rate supervision alone, by contrast, supplies no direct weighting for latent directions that strongly affect the aerosol outputs retained by the Decoder. The frozen-decoder reference target (FDT) addresses both roles by comparing the predicted endpoint with the image of the unperturbed reference endpoint under the same frozen Decoder, while detaching the reference branch.

Throughout the training equations, a superscript \(\star\) marks a reference
quantity: latent states, scales, and rates are derived from the frozen Encoder
applied to the PartMC trajectory, whereas physical reference outputs are taken
from the particle-resolved trajectory. The superscript marks a reference quantity; \(\operatorname{stopgrad}\) separately indicates detachment of the FDT reference branch.

Let \(D_u(\mathbf z)\in\mathbb R^{\modelparam{decoder.output_dim}{550}}\) denote the output of the frozen neural Decoder in its transformed model space. It contains \modelparam{decoder.block_count}{seven} blocks: binned mass, size-resolved composition, binned number, cloud-condensation-nuclei activation, scattering, absorption, and frozen fraction. Physical inverse transformations subsequently use \(N\) for the scale-dependent blocks. Writing \(D_{u,q}\) for block \(q\) with dimension \(d_q\) and \(\mathcal Q\) for the \modelparam{decoder.block_count}{seven} blocks, the per-window endpoint loss is
\begin{equation}
\ell_{\mathrm{FDT},i,k}^{(h)}
=\frac{1}{|\mathcal Q|}\sum_{q\in\mathcal Q}
\frac{1}{d_q}
\left\|
D_{u,q}(\widehat{\mathbf z}_{i,k+h})
-\stopgrad\!\left[D_{u,q}(\mathbf z^\star_{i,k+h})\right]
\right\|_2^2.
\label{eq:frozen-decoder-target}
\end{equation}
The predicted branch remains differentiable through the frozen Decoder to the latent-dynamics network. Because both branches use the same fixed map, a prediction that reaches the reference latent endpoint incurs zero FDT loss independently of the frozen representation residual; away from that endpoint, the loss weights latent errors by their effect on retained binned and diagnostic outputs. Appendix~\ref{frozen-decoder-target-properties} proves this residual-neutrality property and records its local Jacobian interpretation.

\hypertarget{multistep-rollout-learning}{%
\subsection{Multistep rollout learning}\label{multistep-rollout-learning}}

Rollout-aware objectives expose a learned vector field to its own evolving states and have been used to improve prognostic stability in weather and climate parameterization, for example by Brenowitz and Bretherton \cite{brenowitz2018prognostic}. For trajectory \(i\), start index \(k\), and horizon \(h\in\mathcal H=\{\modelparam{training.horizon_min_h}{1},\ldots,\modelparam{training.horizon_max_h}{12}\}\), define the consecutive reference window
\begin{equation}
\mathcal W_{i,k}^{(h)}
=\left((\mathbf z^\star_{i,k+j},s^\star_{i,k+j})\right)_{j=0}^{h}.
\label{eq:rollout-window}
\end{equation}
Adjacent states are separated by \(\Delta t=\modelparam{training.integration_step_h}{1}\) h. Windows remain within one trajectory, may overlap in time, and use every eligible start on the hourly grid.

During training, each rollout receives one initial perturbation. We draw
\begin{equation}
\boldsymbol\xi_{i,k}\sim\mathcal N(\mathbf0,\modelparam{training.perturbation_scale}{0.1}^2I_{d_z}),
\qquad
\boldsymbol\epsilon_{i,k}
=\Delta t\,\overline T_{\dot z}^{-1}(\boldsymbol\xi_{i,k}),
\qquad
\widetilde{\mathbf z}_{i,k}
=\mathbf z^\star_{i,k}+\boldsymbol\epsilon_{i,k},
\label{eq:training-perturbation}
\end{equation}
where \(\boldsymbol\xi\) is expressed in zero-centered transformed latent-rate coordinates. The perturbation applies only to latent shape; physical total number remains at its reference value. The rollout begins from \((\widetilde{\mathbf z}_{i,k},N^\star_{i,k})\) and advances recursively for \(h\) steps, so every predicted state supplies the rates for the next step. Validation, testing, and simulation set \(\boldsymbol\xi=\mathbf0\).

The same displacement is removed from the first latent-rate target, giving
\begin{equation}
\begin{aligned}
\widetilde{\mathbf u}^{\star}_{\dot z,i,k+j}
&:=\overline T_{\dot z}(\dot{\mathbf z}^\star_{i,k+j})
-\mathbb I[j=0]\,\boldsymbol\xi_{i,k},\\
u^\star_{\dot s,i,k+j}
&:=\overline T_{\dot s}(\dot s^\star_{i,k+j}),
\qquad j=0,\ldots,h-1.
\end{aligned}
\label{eq:reference-transformed-rates}
\end{equation}
This is the exact affine correction for the fitted \(\alpha_{\dot z}=\modelparam{transform.latent_rate_power}{1}\) transformation. The perturbation-and-correction construction is designed to expose the learned vector field to states near the encoded reference trajectory, including states that recursive integration can reach after small prediction errors. For the affine rate transformation used here, subtracting the correction from the first target supplies the rate displacement that cancels the sampled latent offset over one Euler step. Optimization therefore favors a restoring component in the local tendency near each training state before the rollout proceeds recursively from its own predictions. This is a structural bias on the neighborhood of the learned vector field rather than generic input-noise augmentation. Its intended role is to reduce amplification of small off-trajectory errors. The present experiments do not isolate that causal contribution, so recovery from perturbations and improved rollout stability remain design objectives; their independent effects await evaluation.

The corresponding predictions \(\widehat{\mathbf u}_{\dot z,i,k+j}\) and \(\widehat u_{\dot s,i,k+j}\) are evaluated at the recursively predicted state. The per-window rate losses compare the arithmetic means of the transformed rates:
\begin{equation}
\begin{aligned}
\ell_{\dot z,i,k}^{(h)}
&=\frac{1}{d_z}
\left\|
\frac1h\sum_{j=0}^{h-1}\widehat{\mathbf u}_{\dot z,i,k+j}
-\frac1h\sum_{j=0}^{h-1}\widetilde{\mathbf u}^\star_{\dot z,i,k+j}
\right\|_2^2,\\
\ell_{\dot s,i,k}^{(h)}
&=\left|
\frac1h\sum_{j=0}^{h-1}\widehat u_{\dot s,i,k+j}
-\frac1h\sum_{j=0}^{h-1}u^\star_{\dot s,i,k+j}
\right|^2.
\end{aligned}
\label{eq:window-mean-rate-loss}
\end{equation}
This objective constrains the net transformed tendency along the self-generated trajectory; it is distinct from an average of stepwise squared errors. Recursive rollout nevertheless couples early state errors to all later rate evaluations and to the endpoint FDT. In the selected model, direct state-rollout terms have \modelparam{training.direct_rollout_weight}{zero} weight. The three active loss families---latent rate, scale rate, and the complete endpoint FDT---have equal normalized total weight.

For horizon \(h\), let \(\mathcal B_{h,b}\), \(b=1,\ldots,B_h\), denote one optimizer batch of eligible rollout windows, and let \(\mathcal H=\{\modelparam{training.horizon_min_h}{1},\ldots,\modelparam{training.horizon_max_h}{12}\}\). The loss differentiated for that batch is
\begin{equation}
\begin{aligned}
\mathcal L_{\mathrm{batch}}^{(h,b)}
&=\omega_h\frac{1}{|\mathcal B_{h,b}|}
\sum_{(i,k)\in\mathcal B_{h,b}}
\left[
\modelparam{training.family_weight}{\frac13}\ell_{\dot z,i,k}^{(h)}
+\modelparam{training.family_weight}{\frac13}\ell_{\dot s,i,k}^{(h)}
+\modelparam{training.family_weight}{\frac13}\ell_{\mathrm{FDT},i,k}^{(h)}
\right],\\
\omega_h
&:=\frac{\sum_{r\in\mathcal H}B_r}{|\mathcal H|B_h}.
\end{aligned}
\label{eq:batch-rollout-objective}
\end{equation}
The factor \(\omega_h\) gives every rollout horizon equal total training weight despite different numbers of eligible batches. The final incomplete batch is retained and normalized by its actual number of windows. It receives the same horizon-specific batch factor as a full batch, so each window in that smaller batch carries correspondingly greater per-window weight. Because the selected run uses no gradient accumulation, Eq.~\eqref{eq:batch-rollout-objective} is backpropagated in one optimizer update for each batch. FDT is evaluated only at the horizon endpoint, whereas the two rate losses use the complete corrected-reference window means defined above. For monitoring, early stopping, and checkpoint selection only, training and validation terms are reduced over valid windows within each horizon and then averaged equally across horizons to form the reported epoch aggregate; that aggregate is not itself differentiated.

\hypertarget{data-training-and-evaluation}{%
\section{Data, Training, and Evaluation}\label{data-training-and-evaluation}}

\hypertarget{particle-resolved-coagulation-trajectories}{%
\subsection{Particle-resolved coagulation trajectories}\label{particle-resolved-coagulation-trajectories}}

The trajectories are drawn from the PartMC scenario library used to train the AeroMELD representation \cite{gasparik2020scenario,riemer2009partmc,zaveri2008mosaic,saleh2026aeromeld}. PartMC represents each aerosol population with a variable set of computational particles and resolves their individual sizes and compositions. The present experiments isolate binary coagulation at a fixed temperature of \modelparam{data.temperature_k}{293} K and pressure of \(\modelparam{data.pressure_pa}{100{,}000}\) Pa. Source states are available every \modelparam{data.source_interval_s}{600} s and are sampled hourly for latent-dynamics training and evaluation. Each of the \modelparam{data.total_trajectories}{20{,}000} trajectories spans \modelparam{data.duration_h}{48} h with \modelparam{data.states_per_trajectory}{49} hourly states; the dataset contains \modelparam{data.species_count}{15} chemical species, \modelparam{data.bin_count}{20} reporting size bins, and nominally \modelparam{data.nominal_particles}{1{,}000} computational particles per state.

The scenario library is built from diverse initial aerosol populations, with no reliance on repeated perturbations of a single trajectory. Training and held-out evaluation nevertheless share the same sampled scenario family and fixed environmental conditions. The evidence therefore applies to coagulation on this covered manifold; variable-environment, cross-process, and three-dimensional coupling are discussed as extensions in Section~\ref{discussion-and-conclusion}.

\begin{table}[H]
\centering
\caption{Scientific scope and size of the particle-resolved coagulation dataset.}
\label{tab:dataset-scope}
\begin{tabular}{@{}ll@{}}
\toprule
Quantity & Value \\
\midrule
Process & Binary coagulation \\
Temperature & \modelparam{data.temperature_k}{293} K \\
Pressure & \(\modelparam{data.pressure_pa}{100{,}000}\) Pa \\
Source-state interval & \modelparam{data.source_interval_s}{600} s \\
Latent-dynamics interval & \modelparam{training.integration_step_h}{1} h \\
Duration and states & \modelparam{data.duration_h}{48} h; \modelparam{data.states_per_trajectory}{49} states per trajectory \\
Trajectories and states & \modelparam{data.total_trajectories}{20{,}000}; \modelparam{data.total_states}{980{,}000} \\
Chemical species & \modelparam{data.species_count}{15} \\
Reporting size bins & \modelparam{data.bin_count}{20} \\
Nominal computational particles & \modelparam{data.nominal_particles}{1{,}000} per state \\
\bottomrule
\end{tabular}
\end{table}

\hypertarget{trajectory-level-partitioning}{%
\subsection{Trajectory-level partitioning}\label{trajectory-level-partitioning}}

Trajectories are partitioned before any data-dependent transformation is fitted, and every state from one physical trajectory remains in the same partition. The resulting sets contain \modelparam{split.train_trajectories}{16{,}000} training trajectories, \modelparam{split.validation_trajectories}{2{,}000} validation trajectories, and \modelparam{split.test_trajectories}{2{,}000} held-out test trajectories. The total-number reference scale, rate-transformation statistics, and all other fitted preprocessing quantities use the training partition only. This trajectory-level separation prevents temporally adjacent states from the same simulation from crossing the evaluation boundary.

\begin{table}[H]
\centering
\caption{Trajectory-level data partition used throughout model fitting and evaluation.}
\label{tab:data-partition}
\begin{tabular}{@{}lrr@{}}
\toprule
Partition & Trajectories & Hourly states \\
\midrule
Training & \modelparam{split.train_trajectories}{16{,}000} & \modelparam{split.train_states}{784{,}000} \\
Validation & \modelparam{split.validation_trajectories}{2{,}000} & \modelparam{split.validation_states}{98{,}000} \\
Held-out test & \modelparam{split.test_trajectories}{2{,}000} & \modelparam{split.test_states}{98{,}000} \\
\bottomrule
\end{tabular}
\end{table}

\hypertarget{frozen-aeromeld-encoder-and-decoder}{%
\subsection{Frozen AeroMELD Encoder and Decoder}\label{frozen-aeromeld-encoder-and-decoder}}

The Encoder--Decoder follows AeroMELD's scale--shape construction \cite{saleh2026aeromeld}. A particle network with \modelparam{encoder.hidden_layers}{two} \modelparam{encoder.hidden_width}{128}-unit ReLU hidden layers maps transformed per-species particle masses to learned features, which are averaged with normalized particle-number weights. The deterministic mean of the \modelparam{model.latent_dim}{nine}-dimensional latent shape is used for every reference state and rollout. The Decoder uses \modelparam{decoder.hidden_layers}{two} \modelparam{decoder.hidden_width}{128}-unit ReLU hidden layers and a \modelparam{decoder.output_dim}{550}-component output layer whose block-specific inverse transformations reconstruct the quantities listed in Section~\ref{aerosol-population-and-scale-shape-representation}. Encoder and Decoder weights and all fitted transformations are fixed before latent-dynamics training, so the evaluation isolates dynamics on one representation.

CCN activation follows the particle-resolved treatment used by Riemer et al.~\cite{riemer2010aging}, and the optical quantities follow a Mie-based mixing-state calculation \cite{yao2022}. Freezing follows the ice-nucleation-active-site-density framework reviewed by Hoose and M{\"o}hler \cite{hoose2012} and uses the dust and black-carbon parameterizations of Niemand et al.~and Schill et al., respectively, within their stated temperature ranges \cite{niemand2012inas,schill2020blackcarbon}; the treatment of internally mixed, multi-species particles follows Tang et al.~\cite{tang2026freezing}.

\paragraph{Sectional baseline.}
We compare AeroMELD with the PartMC sectional Jacobson two-moment
implementation initialized from exactly the same held-out PartMC populations,
matched by trajectory identity. The sectional state uses
\modelparam{baseline.sectional.bin_count}{20} logarithmically spaced wet-diameter
bins from \modelparam{baseline.sectional.diameter_min_um}{0.001} to
\modelparam{baseline.sectional.diameter_max_um}{10}~\(\mu\mathrm m\), with a
constant edge ratio of \modelparam{baseline.sectional.bin_ratio}{1.585}. Each
bin advances number concentration and the mass concentrations of
\modelparam{baseline.sectional.species_count}{15} aerosol species, giving
\modelparam{baseline.sectional.variables_per_bin}{16} variables per bin and
\modelparam{baseline.sectional.state_dim}{320} prognostic coordinates under the
state-count convention used in Figure~\ref{fig:efficiency_accuracy}. The
sectional and AeroMELD evaluations use the same coagulation environment, 48-h
horizon, hourly reporting times, and particle-resolved PartMC reference.
Sectional CCN, optical, and freezing diagnostics are computed by reconstructing
20 weighted sub-bin particles per bin that preserve bin number, wet volume,
and species fractions. Both reduced models then use the same diagnostic
calculations and symmetric error metric. The separate timing benchmark
advances PyPartMC and the sectional model with 60-s steps and disables model
output; the resources and timing scope are reported in
Section~\ref{computational-throughput}.

\hypertarget{optimization-and-model-selection}{%
\subsection{Optimization and model selection}\label{optimization-and-model-selection}}

The latent-dynamics network is optimized with AdamW using learning rate \(\modelparam{training.learning_rate}{10^{-4}}\), weight decay \(\modelparam{training.weight_decay}{10^{-4}}\), batch size \modelparam{training.batch_size}{64}, and gradient-norm clipping at \modelparam{training.gradient_clip}{1.0} \cite{loshchilov2019decoupled}. Validation is evaluated after each epoch, and \modelparam{training.checkpoint_selection}{the final training checkpoint} is used. Every training epoch assigns equal total weight to horizons from \modelparam{training.horizon_min_h}{1} through \modelparam{training.horizon_max_h}{12} h; validation and simulation use the same \modelparam{training.integration_step_h}{one}-hour forward-Euler step without training perturbations.

\hypertarget{evaluation-quantities-and-error-metric}{%
\subsection{Evaluation quantities and error metric}\label{evaluation-quantities-and-error-metric}}

We evaluate the latent shape \(\mathbf z\), physical total number \(N\), and all seven decoded blocks. For a predicted vector \(\widehat{\mathbf q}\) and reference vector \(\mathbf q^\star\), the symmetric relative error is
\begin{equation}
e(\widehat{\mathbf q},\mathbf q^\star)
=\frac{\|\widehat{\mathbf q}-\mathbf q^\star\|_2}
{\|\widehat{\mathbf q}\|_2+\|\mathbf q^\star\|_2+10^{-30}}.
\label{eq:symmetric-relative-error}
\end{equation}
The scalar definition for \(N\) replaces the norms by absolute values. For CCN activation, evaluation is restricted to supersaturations from \(0.1\%\) to \(0.6\%\). Optical coefficients are converted to \(\mathrm{Mm}^{-1}\), floored at 0.1, and transformed with \(\log_{10}\); frozen fractions use \(\log_{10}(q+10^{-30})\). Appendix~\ref{diagnostic-evaluation-domains} records the complete domains.

The rollout is initialized from the encoded reference state, so the \(t=0\) errors in \(\mathbf z\) and \(N\) are zero and are omitted from their distribution plots. Decoded physical quantities at \(t=0\) are retained; they measure the frozen representation's reconstruction error before the latent rollout introduces any additional discrepancy.

\hypertarget{results}{%
\section{Results}\label{results}}

\hypertarget{a-48-h-coagulation-rollout-in-latent-and-physical-space}{%
\subsection{A representative 48-h coagulation rollout in latent and physical space}\label{a-48-h-coagulation-rollout-in-latent-and-physical-space}}

Figure~\ref{fig:coag_rollout_example} presents the held-out trajectory at the
50th-percentile position in the evaluation-only composite MSE ranking used for
illustrative case selection. This score gives equal weight to the nine
displayed quantity groups and averages across the evaluated times; it is
distinct from both the training objective and the symmetric relative error in
Eq.~\eqref{eq:symmetric-relative-error}. The empirical error rank alone determined the case selection, placing this trajectory at typical overall performance under the selection metric.

The prediction tracks the monotonic decline in total number and the dominant
temporal variation of the latent coordinates. The decoded outputs capture the
shift toward larger particles and the accompanying changes in CCN activation,
optical properties, frozen fraction, and size-resolved composition, while
localized discrepancies remain in some latent coordinates and fine
bin-resolved structure. Corresponding examples at the 10th- and
90th-percentile positions are provided in
Appendix~\ref{illustrative-rollouts-across-error-distribution}.
Section~\ref{latent-rollout-accuracy-and-decoded-physical-errors} next
quantifies performance across the complete held-out population.

The sectional curves provide a conventional reduced-order reference for the
same initial population. Both reduced models reproduce the total-number decline
and the shift in binned number. At 48 h, AeroMELD is closer to PartMC for total
number, binned number, and frozen fraction. The sectional model is closer for
CCN activation and for both optical coefficients. The composition
panels show AeroMELD and PartMC only. This single trajectory illustrates how
the representations distribute their errors; the population-level comparison
in Figure~\ref{fig:error_distributions} determines how those patterns extend
across the held-out set.

\begin{figure}[H]
\centering
\includegraphics[width=\textwidth]{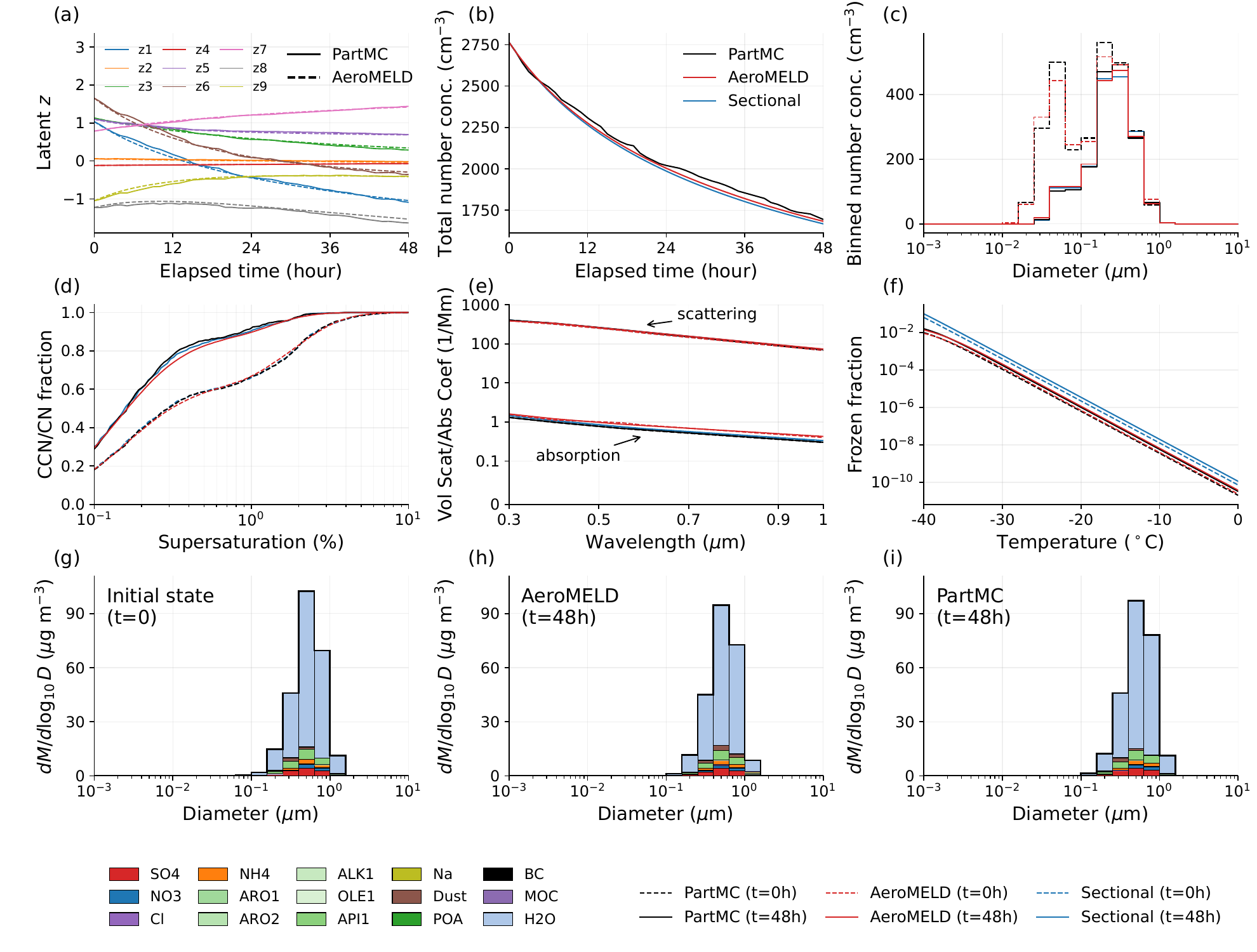}
\caption{Illustrative 48-h coagulation rollout in latent and decoded physical
space for the held-out case at the 50th-percentile position in the
evaluation-only composite MSE ranking among
\modelparam{split.test_trajectories}{2{,}000} trajectories. The ranking gives
equal weight to the nine displayed quantity groups and averages across the
evaluated times.
(a) \modelparam{model.latent_dim}{Nine} latent-shape coordinates; solid curves denote encoded PartMC
reference states and dashed curves denote the latent-dynamics rollout.
(b) Physical total number concentration, with the reference in black, the
prediction in red, and the sectional-model result in blue. (c--f) Binned number
concentration, CCN-to-CN ratio, aerosol optical properties, and frozen fraction.
Dashed and solid curves denote the initial and 48-h states, respectively.
Black, red, and blue denote reference, prediction, and sectional-model results.
Arrows distinguish scattering and absorption in
the optical panel. (g--i) Size-resolved composition at the initial reference
state, predicted 48-h state, and reference 48-h state. Black outlines show
total binned mass \(M_b\), and colors show normalized species fractions
\(P_{ab}\). Lower-error and upper-tail examples selected by the same ranking
are shown in Appendix~\ref{illustrative-rollouts-across-error-distribution};
aggregate held-out errors are reported in
Figures~\ref{fig:error_distributions} and~\ref{fig:error_vs_time}.}
\label{fig:coag_rollout_example}
\end{figure}

\hypertarget{latent-rollout-accuracy-and-decoded-physical-errors}{%
\subsection{Latent-rollout accuracy and decoded physical errors}\label{latent-rollout-accuracy-and-decoded-physical-errors}}

The direct state-space metrics show that the learned vector field advances the compact coagulation state accurately across the held-out population. Over all post-initial trajectory--time pairs, the median symmetric error in latent shape is \(\modelparam{results.latent.pooled_median_pct}{3.5\%}\), with a 95th percentile of \(\modelparam{results.latent.pooled_p95_pct}{8.9\%}\), while the median total-number error is \(\modelparam{results.number.pooled_median_pct}{0.52\%}\). The calculation uses state-space outputs from all \modelparam{split.test_trajectories}{2{,}000} recursive 48-h rollouts independently of the decoded quantities. The bounded time-dependent distributions in Figure~\ref{fig:error_vs_time} provide further support for the central proof of concept: within the sampled scenario manifold, the \modelparam{model.latent_dim}{nine}-dimensional latent shape carries a learnable and stable nonlinear coagulation trajectory.

Figures~\ref{fig:error_distributions} and~\ref{fig:error_vs_time} compare decoded predictions directly with the raw particle-resolved reference outputs, without passing those references through the frozen Encoder--Decoder. Their physical-space discrepancies therefore combine the representation's reconstruction error with the additional discrepancy introduced by latent rollout. The decomposition in Appendix~\ref{ed-reconstruction-and-ld-discrepancy} shows that the pooled transformed-output reconstruction RMSE is \modelparam{results.ed_ld.pooled_ratio}{4.9} times the additional latent-dynamics discrepancy and is larger in all \modelparam{decoder.block_count}{seven} decoded blocks. At 12, 24, and 48 h, the reconstruction-to-rollout discrepancy ratio is also at least three. The frozen representation therefore sets the dominant pooled decoded-error floor at these preregistered lead times, and the rollout adds a smaller drift. The comparison concerns residual magnitudes; additive percentage attribution of total squared error would also require the cross term.

The decoded blocks retain different error scales and tail widths. Size-resolved composition, binned mass, and absorption have broader distributions than total number, CCN activation, scattering, or frozen fraction, but most median and interquartile ranges remain bounded over the evaluated period. These block-to-block differences reflect representation sensitivity and dynamics; scientific importance is a separate consideration. The broader tails define the present accuracy boundary without changing the direct evidence that the latent trajectory remains stable over the evaluated 48 h.

\begin{figure}[H]
\centering
\includegraphics[width=\textwidth]{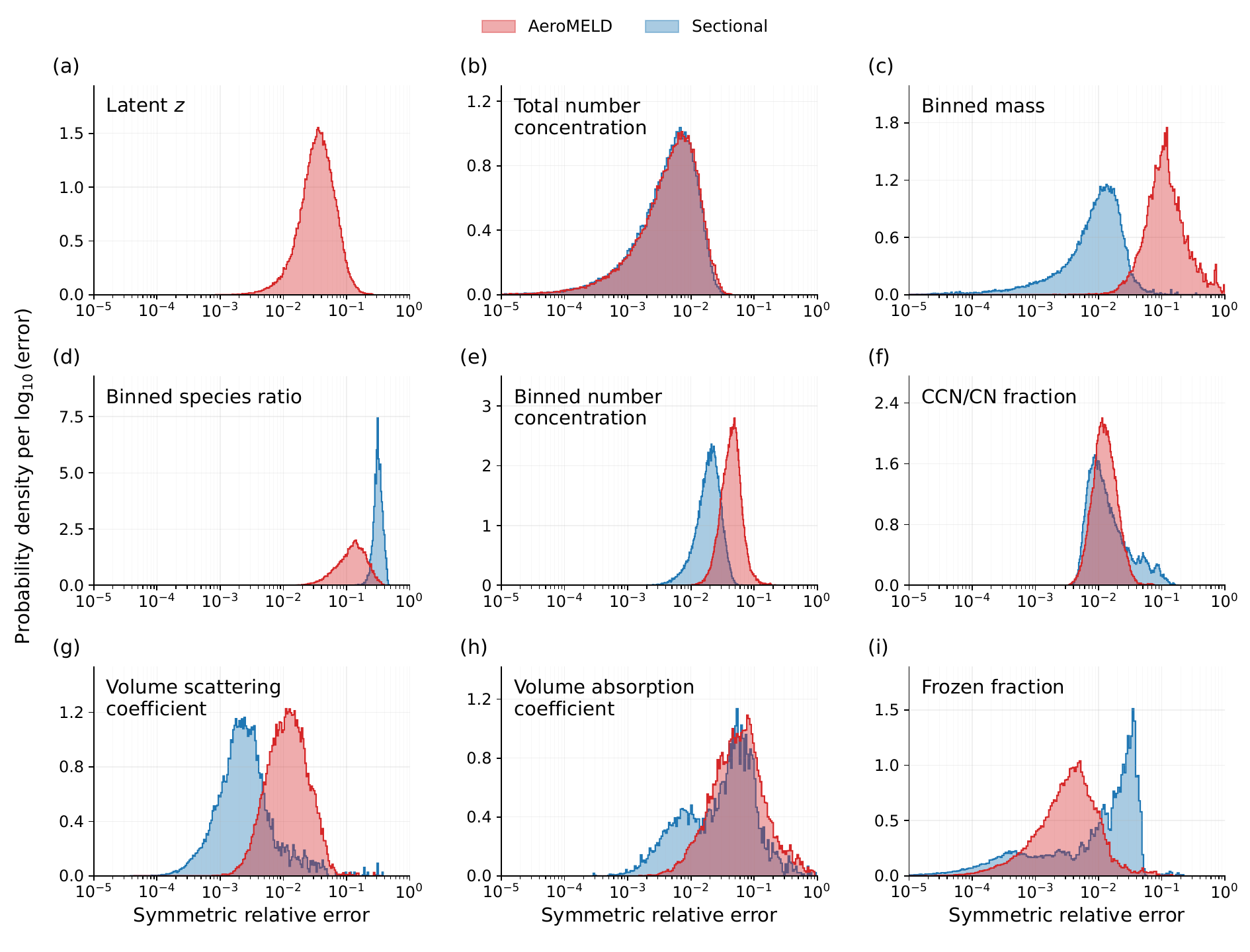}
\caption{Distribution of symmetric rollout errors across
\modelparam{split.test_trajectories}{2{,}000} held-out trajectories. The
decoded panels include all \modelparam{data.states_per_trajectory}{49} hourly
states from 0 to 48 h. The latent-shape and total-number panels include the 48
post-initial states because initialization fixes their \(t=0\) errors to zero.
Every included trajectory--time pair has equal weight. Panel (a) shows the
AeroMELD latent-shape error. Panels (b--i) compare AeroMELD (red) and the
sectional model (blue) against the same particle-resolved PartMC reference for
total number \(N\), total binned mass \(M_b\), normalized size-resolved
composition \(P_{ab}\), binned number \(N_b\), CCN-to-CN ratio, volume
scattering coefficient, volume absorption coefficient, and frozen fraction.
Table~\ref{tab:pooled-symmetric-error-summary} reports the pooled mean and
2.5th--97.5th empirical percentile interval for the same quantities. Decoded
\(t=0\) errors are retained and include frozen-representation reconstruction
error. Section~\ref{evaluation-quantities-and-error-metric} defines the metric
and diagnostic transformations.}
\label{fig:error_distributions}
\end{figure}

\begin{table}[H]
\centering
\caption{Pooled symmetric relative errors for the nine evaluation quantities
in Figure~\ref{fig:error_distributions}. Values are percentages and are
reported as mean [2.5th percentile, 97.5th percentile] over the same
trajectory--time pairs used in that figure. Brackets denote empirical
distribution percentiles. The latent-shape entry applies only to AeroMELD.
Boldface marks the lower pooled mean when the two models show a clear
difference; the similar total-number means are left unbolded.}
\label{tab:pooled-symmetric-error-summary}
\footnotesize
\begin{tabularx}{\textwidth}{@{}Xcc@{}}
\toprule
Quantity & \multicolumn{2}{c}{Symmetric relative error of} \\
 & AeroMELD & Sectional model \\
\midrule
Latent shape \(\mathbf z\) &
\modelparam{results.latent.pooled_mean_pct}{4.1\%}
[\modelparam{results.latent.pooled_p2_5_pct}{0.786\%},
 \modelparam{results.latent.pooled_p97_5_pct}{10.6\%}] &
N/A \\
Total number concentration \(N\) &
\textbf{\modelparam{results.number.pooled_mean_pct}{0.656\%}}
[\modelparam{results.number.pooled_p2_5_pct}{0.0231\%},
 \modelparam{results.number.pooled_p97_5_pct}{2.06\%}] &
\textbf{\modelparam{results.sectional.total_number.pooled_mean_pct}{0.614\%}}
[\modelparam{results.sectional.total_number.pooled_p2_5_pct}{0.0224\%},
 \modelparam{results.sectional.total_number.pooled_p97_5_pct}{1.87\%}] \\
Total binned mass \(M_b\) &
\modelparam{results.decoded.mass.pooled_mean_pct}{15\%}
[\modelparam{results.decoded.mass.pooled_p2_5_pct}{2.76\%},
 \modelparam{results.decoded.mass.pooled_p97_5_pct}{59.7\%}] &
\textbf{\modelparam{results.sectional.mass.pooled_mean_pct}{1.2\%}}
[\modelparam{results.sectional.mass.pooled_p2_5_pct}{0.00372\%},
 \modelparam{results.sectional.mass.pooled_p97_5_pct}{3.68\%}] \\
Normalized size-resolved composition \(P_{ab}\) &
\textbf{\modelparam{results.decoded.composition.pooled_mean_pct}{13.5\%}}
[\modelparam{results.decoded.composition.pooled_p2_5_pct}{3.61\%},
 \modelparam{results.decoded.composition.pooled_p97_5_pct}{28.6\%}] &
\modelparam{results.sectional.composition.pooled_mean_pct}{31.2\%}
[\modelparam{results.sectional.composition.pooled_p2_5_pct}{19.1\%},
 \modelparam{results.sectional.composition.pooled_p97_5_pct}{41.9\%}] \\
Binned number concentration \(N_b\) &
\modelparam{results.decoded.binned_number.pooled_mean_pct}{4.62\%}
[\modelparam{results.decoded.binned_number.pooled_p2_5_pct}{2.03\%},
 \modelparam{results.decoded.binned_number.pooled_p97_5_pct}{8.85\%}] &
\textbf{\modelparam{results.sectional.binned_number.pooled_mean_pct}{1.99\%}}
[\modelparam{results.sectional.binned_number.pooled_p2_5_pct}{0.411\%},
 \modelparam{results.sectional.binned_number.pooled_p97_5_pct}{3.86\%}] \\
CCN-to-CN ratio &
\textbf{\modelparam{results.decoded.ccn.pooled_mean_pct}{1.38\%}}
[\modelparam{results.decoded.ccn.pooled_p2_5_pct}{0.555\%},
 \modelparam{results.decoded.ccn.pooled_p97_5_pct}{2.95\%}] &
\modelparam{results.sectional.ccn.pooled_mean_pct}{1.98\%}
[\modelparam{results.sectional.ccn.pooled_p2_5_pct}{0.532\%},
 \modelparam{results.sectional.ccn.pooled_p97_5_pct}{8.64\%}] \\
Volume scattering coefficient &
\modelparam{results.decoded.scattering.pooled_mean_pct}{1.52\%}
[\modelparam{results.decoded.scattering.pooled_p2_5_pct}{0.277\%},
 \modelparam{results.decoded.scattering.pooled_p97_5_pct}{4.49\%}] &
\textbf{\modelparam{results.sectional.scattering.pooled_mean_pct}{0.547\%}}
[\modelparam{results.sectional.scattering.pooled_p2_5_pct}{0.0374\%},
 \modelparam{results.sectional.scattering.pooled_p97_5_pct}{2.81\%}] \\
Volume absorption coefficient &
\modelparam{results.decoded.absorption.pooled_mean_pct}{8.76\%}
[\modelparam{results.decoded.absorption.pooled_p2_5_pct}{0.81\%},
 \modelparam{results.decoded.absorption.pooled_p97_5_pct}{39.9\%}] &
\textbf{\modelparam{results.sectional.absorption.pooled_mean_pct}{5.49\%}}
[\modelparam{results.sectional.absorption.pooled_p2_5_pct}{0.203\%},
 \modelparam{results.sectional.absorption.pooled_p97_5_pct}{24.3\%}] \\
Frozen fraction &
\textbf{\modelparam{results.decoded.frozen.pooled_mean_pct}{0.56\%}}
[\modelparam{results.decoded.frozen.pooled_p2_5_pct}{0.0244\%},
 \modelparam{results.decoded.frozen.pooled_p97_5_pct}{2.43\%}] &
\modelparam{results.sectional.frozen.pooled_mean_pct}{1.72\%}
[\modelparam{results.sectional.frozen.pooled_p2_5_pct}{0.00753\%},
 \modelparam{results.sectional.frozen.pooled_p97_5_pct}{4.6\%}] \\
\bottomrule
\end{tabularx}
\end{table}

Figure~\ref{fig:error_distributions} and
Table~\ref{tab:pooled-symmetric-error-summary} quantify the accuracy trade-off
between two reduced aerosol states. Total-number errors are comparable. The
sectional baseline has lower errors for total binned mass, binned number,
scattering, and absorption. AeroMELD has substantially lower composition and
frozen-fraction errors. The CCN distributions have similar medians. AeroMELD
has a lower pooled mean and a much narrower CCN upper tail.

The error patterns reflect both the representations and the model
formulations. AeroMELD learns from the joint particle size--composition
distribution, and its frozen Decoder retains the selected diagnostics. The
sectional state carries the mixing-state detail encoded by its bin variables.
The models also differ in parameterization and numerical formulation. The
benchmark compares complete model configurations, including their
representations, parameterizations, and numerical formulations. Component-level
attribution lies outside the benchmark.

Each model combines distinct error sources. AeroMELD errors combine frozen
representation residual and latent-rollout discrepancy. Sectional errors
combine initialization and binning, diagnostic closure, and numerical
evolution errors. Both models are evaluated against the same
particle-resolved outputs with the same paper-facing metric, so
Figure~\ref{fig:error_distributions} and
Table~\ref{tab:pooled-symmetric-error-summary} compare end-to-end diagnostic
performance across their complete error pathways.

\begin{figure}[H]
\centering
\includegraphics[width=\textwidth]{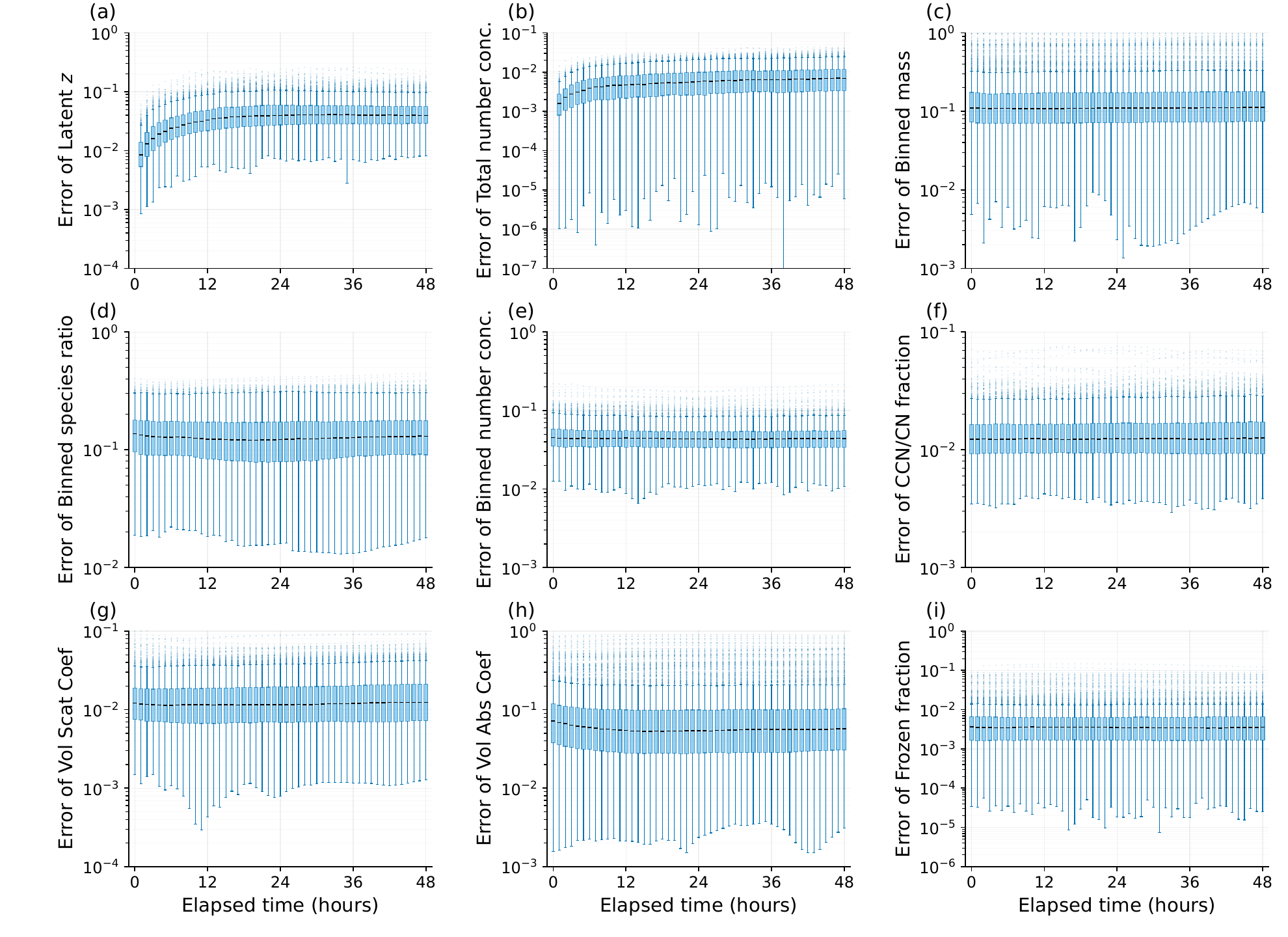}
\caption{Time evolution of symmetric rollout-error distributions across \modelparam{split.test_trajectories}{2{,}000} held-out trajectories. At each hourly state, boxes show the median and interquartile range, whiskers extend to 1.5 times the interquartile range, and points denote outliers for the same nine quantities and panel order as Figure~\ref{fig:error_distributions}. Vertical axes are logarithmic. The \(t=0\) values for \(\mathbf z\) and \(N\) are omitted because they are fixed by initialization; decoded \(t=0\) errors include frozen-representation reconstruction error.}
\label{fig:error_vs_time}
\end{figure}

Because binary coagulation conserves total aerosol mass, decoded mass
retention provides an additional budget diagnostic that is distinct from the
latent-shape and total-number errors. Panel (a) of
Figure~\ref{fig:total_mass_retention} normalizes each held-out rollout by its
own decoded initial total mass, thereby isolating temporal drift from any
frozen-representation offset at initialization. Relative to the ideal
unit-retention baseline, the median symmetric errors are \modelparam{results.mass.retention_24_median_pct}{1.7\%} at 24 h and
\modelparam{results.mass.retention_48_median_pct}{2.2\%} at 48 h; at 48 h the interquartile range is \modelparam{results.mass.retention_48_q25_pct}{0.95\%}--\modelparam{results.mass.retention_48_q75_pct}{4.11\%} and the 95th
percentile is \modelparam{results.mass.retention_48_p95_pct}{11.6\%}. Panel (b) compares decoded total mass directly with the PartMC reference and therefore includes both the static representation offset and rollout drift. The central retention distribution indicates modest mass drift for the present proof-of-concept rollouts. In the upper tail, the unconstrained model lacks strict mass closure for some trajectories.
Appendix~\ref{decoded-total-mass-retention} examines cases selected from the
lower and upper tails of the self-normalized decoded-mass trajectory extrema.
Their CCN, optical, and frozen-fraction errors remain much smaller than their
scalar mass discrepancies at 48 h. In these selected trajectories, large scalar
mass discrepancies coexist with much smaller diagnostic errors. Strict mass
conservation remains an unresolved model constraint.

The examples in Figures~\ref{fig:ccn_mixing_state_examples} and
\ref{fig:frozen_fraction_mixing_state_examples} explain why AeroMELD can be
closer to PartMC than the Sectional model. AeroMELD can retain particle-level
mixing-state information in its learned representation, whereas the Sectional
model assigns a mean composition to each size bin. Each figure compares a
complex mixing-state case, where this difference matters, with a simple case,
where both reduced models perform similarly.

For CCN activation (Figure~\ref{fig:ccn_mixing_state_examples}), PartMC
particles in the red-framed case span a wide range of critical
supersaturations at similar diameters because their compositions differ. The
Sectional model replaces this variability with the bin-mean composition,
producing a narrow diameter--critical-supersaturation relation and
overestimating the activated fraction. AeroMELD remains close to PartMC. In
the blue-framed case, the PartMC and Sectional
diameter--critical-supersaturation distributions are already similar, and both
reduced models reproduce the PartMC CCN spectrum.

\begin{figure}[H]
\centering
\includegraphics[width=\textwidth]{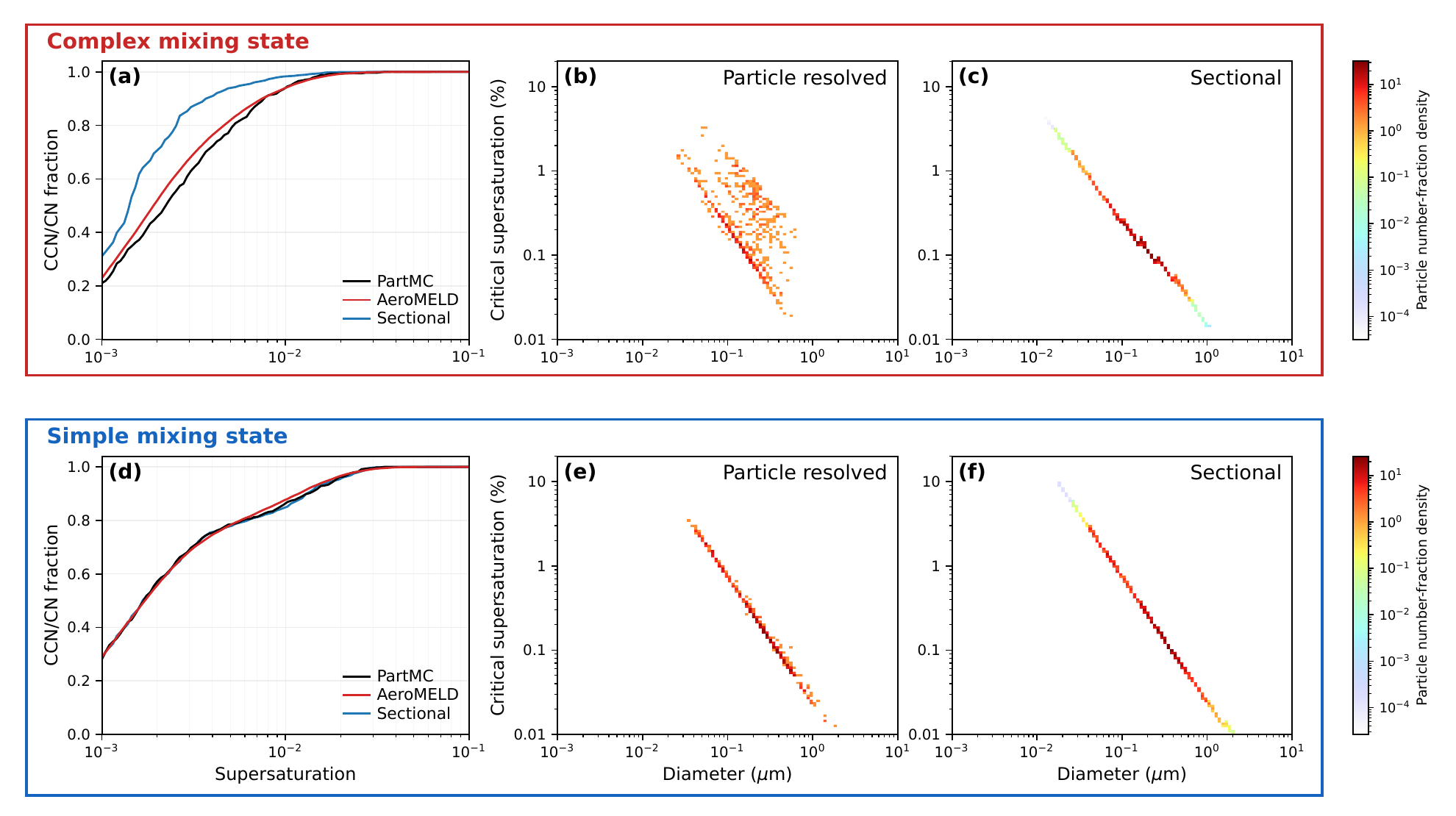}
\caption{CCN spectra and diameter--critical-supersaturation distributions at
48 h. The red and blue frames mark complex and simple mixing-state cases,
respectively. (a,d) CCN-to-CN fraction as a function of supersaturation for
PartMC (black), AeroMELD (red), and the Sectional model (blue). (b,e) PartMC
particle number-fraction densities over wet diameter and critical
supersaturation. (c,f) Corresponding Sectional distributions. Colors show
particle number-fraction density in the plotted coordinates.}
\label{fig:ccn_mixing_state_examples}
\end{figure}

The INP comparison in
Figure~\ref{fig:frozen_fraction_mixing_state_examples} shows the same loss of
mixing-state information more directly. In the red-framed case, most PartMC
particles contain no OIN, while a small subset has high OIN mass fractions.
The Sectional model spreads the bin-mean OIN content across all representatives
in each size bin, moving the population toward the internal-mixing limit.
Within a fixed size bin and for a fixed bin-level species surface-area
composition, the internally mixed state gives the upper frozen-fraction bound,
whereas the externally mixed state gives the lower bound
\cite{tang2026freezing}. The Sectional frozen fraction is consequently
substantially higher than the PartMC result near the cold end of the spectrum,
while AeroMELD nearly overlaps PartMC. In the blue-framed case, the PartMC and
Sectional BC distributions have similar size-dependent patterns, and all three
frozen-fraction spectra agree closely.

\begin{figure}[H]
\centering
\includegraphics[width=\textwidth]{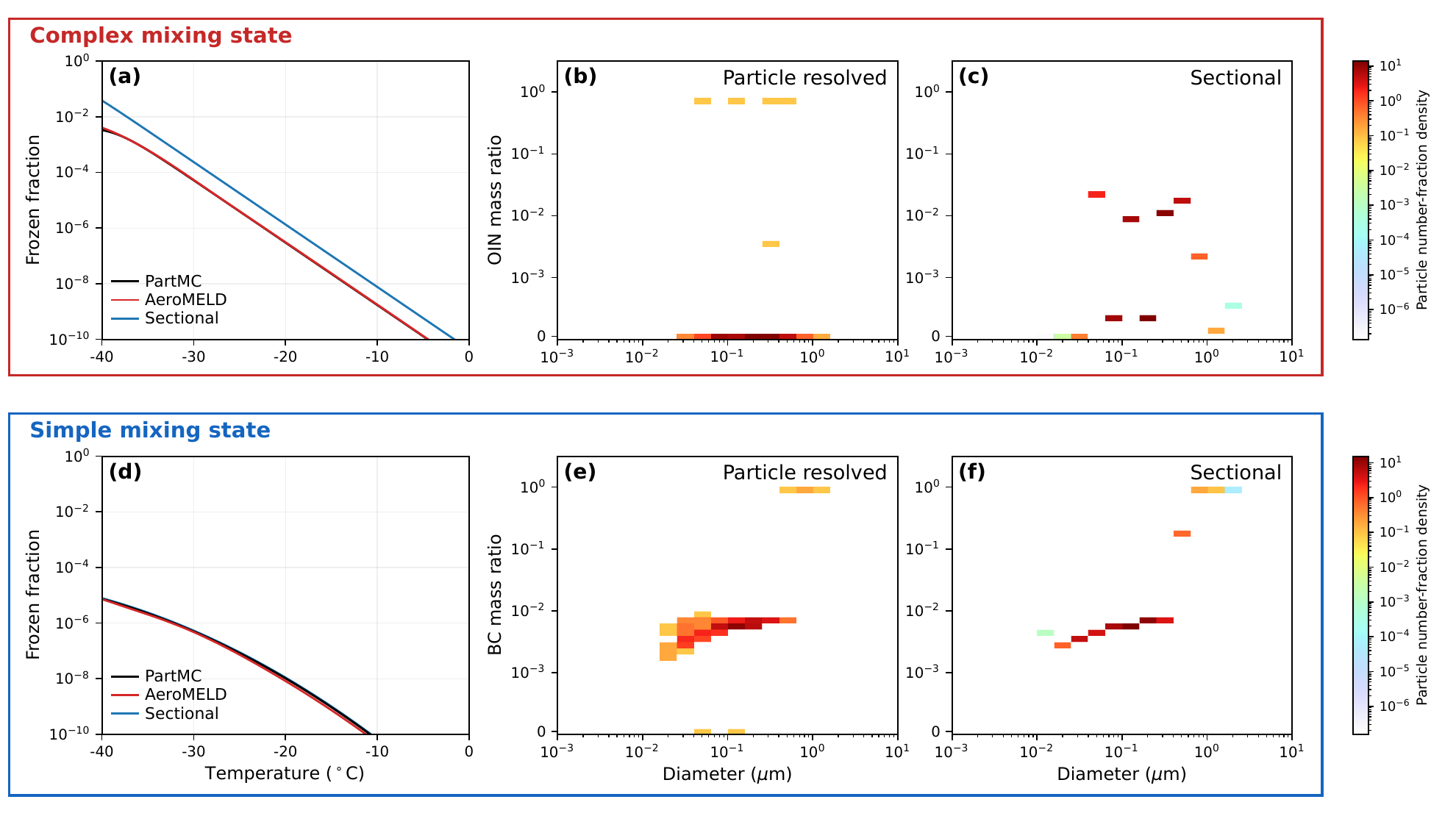}
\caption{Frozen-fraction spectra and insoluble-species mixing-state
distributions at 48 h. The red and blue frames mark complex and simple
mixing-state cases, respectively. (a,d) Frozen fraction as a function of
temperature for PartMC (black), AeroMELD (red), and the Sectional model (blue).
(b,c) PartMC and Sectional particle number-fraction densities over wet diameter
and OIN mass ratio for the complex case. (e,f) Corresponding distributions over
wet diameter and BC mass ratio for the simple case. OIN is shown in (b,c)
because it dominates freezing under the adopted parameterizations, whereas BC
is shown in (e,f) because OIN is absent and BC controls freezing in the simple
case. Colors show particle number-fraction density in the plotted coordinates.}
\label{fig:frozen_fraction_mixing_state_examples}
\end{figure}

\subsection{Matched integration benchmark}\label{computational-throughput}

The three models were timed on the same
\modelparam{split.test_trajectories}{2{,}000} initial conditions for 48 h of
Brownian coagulation with a 60-s step (2,880 steps) and model output disabled.
Mean integration time per case was 2.1866 s for PyPartMC and 0.7669 s for the
sectional model, each on one CPU core, and 0.4269 ms for AeroMELD, amortized
over a batch of 2,000 on one NVIDIA L40S. The AeroMELD timing excludes loading,
encoding, data transfer, decoding, validation, and file output. It measures
batched integration throughput rather than single-case latency, and the
CPU--GPU comparison is not an equal-hardware benchmark.

\hypertarget{integration-step-robustness}{%
\subsection{Integration-step robustness}\label{integration-step-robustness}}

The selected model was trained and normally evaluated with a \modelparam{training.integration_step_h}{1} h
forward-Euler step. Using this first-order, single-stage scheme provides a
deliberately conservative test of the learned continuous-time dynamics: each
step requires only one evaluation of the neural right-hand side.
Figure~\ref{fig:dt_stability} repeats the 48-h latent-shape and total-number
evaluations over fixed Euler steps from 1 s to 8 h and adds an 8 h RK4
comparison without retraining the model.

For the illustrative held-out case, the latent paths in panel (a) remain
closely aligned across the 10 min, 1 h, and 4 h Euler rollouts, with larger but
still finite differences at 8 h. The corresponding total-number curves in
panel (c) nearly overlap for 10 min and 1 h, remain close at 4 h, and separate
visibly at 8 h. Across all \modelparam{split.test_trajectories}{2{,}000} held-out trajectories, the median and
interquartile range of the 48-h latent-shape error are nearly unchanged from
1 s through 4 h and increase only modestly for 8 h Euler. Total-number errors
remain low through 2 h, increase at 4 h, and broaden more clearly for 8 h
Euler. All \modelparam{split.test_trajectories}{2{,}000} trajectories remain finite through 48 h for every tested Euler step and for 8 h RK4; the 8 h Euler integration remains numerically stable despite its loss of accuracy.

Changing only the numerical solver at \(\Delta t=8\) h lowers the median
latent-shape error from \modelparam{results.dt.euler8.latent_median_pct}{5.67\%} with Euler to \modelparam{results.dt.rk4_8.latent_median_pct}{3.98\%} with RK4 and lowers the
median total-number error from \modelparam{results.dt.euler8.number_median_pct}{2.30\%} to \modelparam{results.dt.rk4_8.number_median_pct}{0.73\%}. The improvement is most pronounced for total number, showing that a higher-order integrator can reduce the coarse-step degradation while using the same learned trajectory. The results show a numerically accommodating latent ODE. The inexpensive Euler update remains robust across a broad seconds-to-hours range, and the same learned vector field can be integrated with a higher-order solver when an unusually coarse step demands greater accuracy.
At the 1 h step used here, the experiment shows numerically stable integration without a finer internal substep for this coagulation operator.

\begin{figure}[H]
\centering
\includegraphics[width=\textwidth]{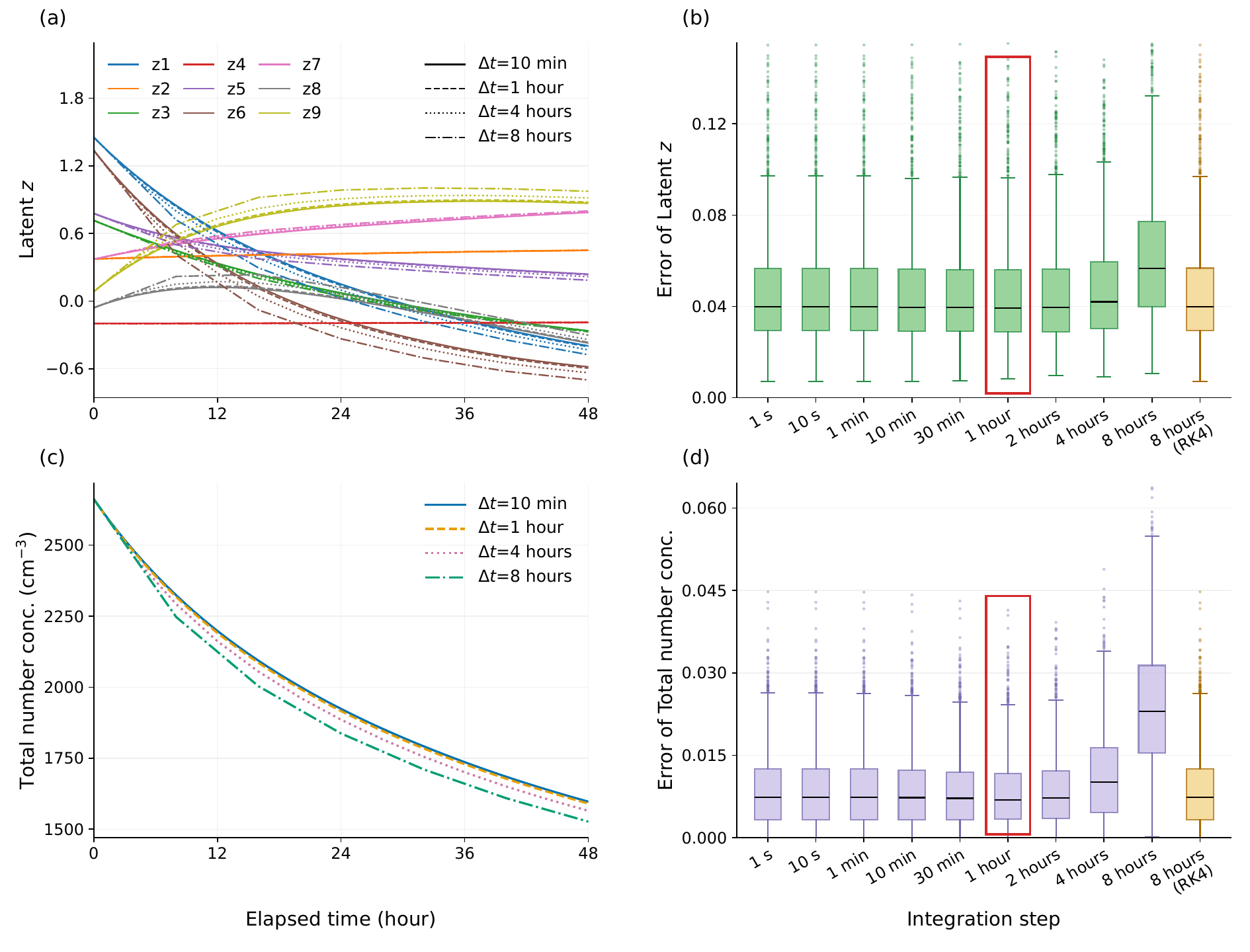}
\caption{Integration-step robustness and solver flexibility of the neural-ODE
rollout.
(a) AeroMELD latent-shape rollouts for one illustrative held-out case. Colors
identify \(z_1\)--\(z_{\modelparam{model.latent_last_index}{9}}\), and line styles identify fixed forward-Euler steps
of \(\Delta t=10\) min, 1 h, 4 h, and 8 h. (b) Distributions across \modelparam{split.test_trajectories}{2{,}000}
held-out trajectories of the 48-h latent-shape symmetric relative error
against the encoded PartMC reference. The first nine boxes use forward Euler
with steps from 1 s to 8 h; the final box uses classical fourth-order
Runge--Kutta (RK4) with \(\Delta t=8\) h. (c) Total number concentration for
the same illustrative case and the same four Euler steps as in panel (a). (d)
Corresponding distributions of the 48-h total-number symmetric relative error
against the PartMC reference, again comparing the Euler step-size sweep with
8 h RK4. Red outlines in panels (b) and (d) mark \(\Delta t=\modelparam{training.integration_step_h}{1}\) h, the only
integration step used during training. Boxes show medians and interquartile
ranges; whiskers extend to \(1.5\) times the interquartile range, and points
denote outliers. The RK4 comparison uses the same learned vector field and initial states, with the numerical solver as the sole change.}
\label{fig:dt_stability}
\end{figure}

\hypertarget{effect-of-rollout-aware-training}{%
\subsection{Effect of rollout-aware training}\label{effect-of-rollout-aware-training}}

An otherwise matched single-step model provides a direct test of rollout-aware
training. Figure~\ref{fig:single_multi_step} compares the two models over the
same \modelparam{split.test_trajectories}{2{,}000} held-out 48-h trajectories using both illustrative rollouts and
population-level error summaries. The frozen representation, latent-dynamics
architecture, test cases, and one-hour forward-Euler evaluation are shared;
the comparison changes the confirmed single-step versus multistep training
configuration.

\begin{figure}[H]
\centering
\includegraphics[width=\textwidth]{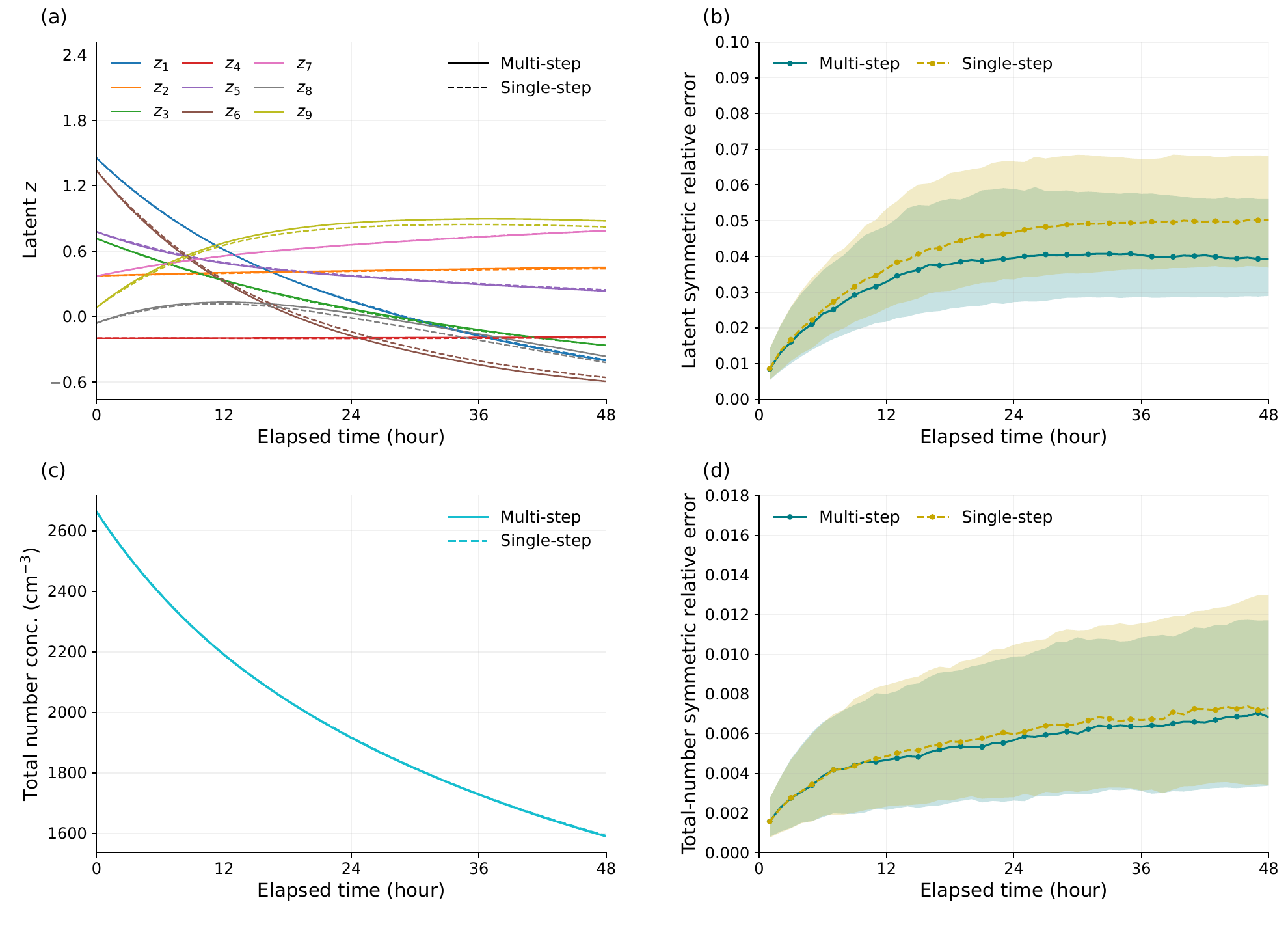}
\caption{Matched multistep--single-step comparison over 48-h held-out
rollouts. (a) \modelparam{model.latent_dim}{Nine} latent-shape coordinates for one illustrative held-out
trajectory. Colors identify \(z_1\)--\(z_{\modelparam{model.latent_last_index}{9}}\), while solid and dashed curves
denote the multistep and single-step models, respectively. (b) Time evolution
across all \modelparam{split.test_trajectories}{2{,}000} held-out trajectories of the mean, median, 25th percentile, and
75th percentile of the latent-shape symmetric relative error at each
post-initial hourly state. (c) Physical total number concentration for the
same illustrative trajectory as in panel (a), with the multistep and
single-step models again shown by solid and dashed curves. (d) Corresponding
mean, median, 25th-percentile, and 75th-percentile total-number symmetric
relative errors across the \modelparam{split.test_trajectories}{2{,}000} held-out trajectories. Both models start from
the same reference latent shape and total number; their zero initial errors
are omitted from panels (b) and (d).}
\label{fig:single_multi_step}
\end{figure}

The latent-error summaries separate progressively with lead time. At 48 h,
the mean latent symmetric relative error is \modelparam{results.matched.multistep.latent_48_mean_pct}{4.54\%} for the multistep model and
\modelparam{results.matched.single_step.latent_48_mean_pct}{5.58\%} for the single-step model, an \modelparam{results.matched.latent_48_mean_reduction_pct}{18.6\%} reduction. The corresponding
medians are \modelparam{results.matched.multistep.latent_48_median_pct}{3.92\%} and \modelparam{results.matched.single_step.latent_48_median_pct}{5.03\%}, a \modelparam{results.matched.latent_48_median_reduction_pct}{22.0\%} reduction. The 25th- and 75th-percentile curves show the same ordering at medium and long lead times, confirming that the improvement extends across the central error distribution. The total-number results differ much less: the two trajectories nearly overlap, and the 48-h error summaries remain similar, with mean errors of \modelparam{results.matched.multistep.number_48_mean_pct}{0.84\%} and \modelparam{results.matched.single_step.number_48_mean_pct}{0.90\%} and median errors of \modelparam{results.matched.multistep.number_48_median_pct}{0.68\%} and \modelparam{results.matched.single_step.number_48_median_pct}{0.73\%} for the multistep and single-step models, respectively. These small differences are not consistently ordered across
lead times and summary statistics. The measurable benefit of rollout-aware
training is therefore concentrated in the nine-dimensional latent-shape
evolution; both training strategies learn the scalar total-number decline
comparably well.

This per-lead-time symmetric-error comparison is complementary to the
per-trajectory latent RMSE analysis in
Appendix~\ref{multistep-single-step-tables}, which summarizes the first 24 h
across all nine latent coordinates and lead times. Under that metric, the
24-h mean latent RMSE decreases from \modelparam{results.matched.single_step.lead_24_mean_trajectory_rmse}{0.0734} to \modelparam{results.matched.multistep.lead_24_mean_trajectory_rmse}{0.0625}, a \modelparam{results.matched.lead_24_mean_trajectory_reduction_pct}{14.9\%} reduction, and
the largest relative gains occur in the upper tail. Both models remain finite over the evaluated 48-h rollouts. The evidence supports improved long-lead latent accuracy and tail robustness; stable coagulation rollouts are achieved by both training strategies.

The growing separation in latent error with lead time is consistent with the
purpose of rollout exposure: during multistep training, prediction errors can
feed back through the learned vector field before the objective is evaluated
at later states. The much smaller contrast in total-number error locates this advantage primarily in the more complex latent-shape trajectory, with little corresponding benefit for the scalar scale coordinate.
Decoder-consistent endpoint supervision also evaluates the predicted state in
the physical-output subspace retained by the frozen representation without
asking the dynamics model to compensate for a static reconstruction residual.
The matched experiment evaluates the combined multistep training
configuration; it does not attribute the observed gain separately to rollout
exposure, the frozen-decoder target, or the training perturbation.

\hypertarget{efficiency-accuracy-tradeoff}{%
\subsection{Efficiency--accuracy trade-off across aerosol representations}
\label{efficiency-accuracy-tradeoff}}

\begin{figure}[H]
\centering
\includegraphics[width=0.86\textwidth]{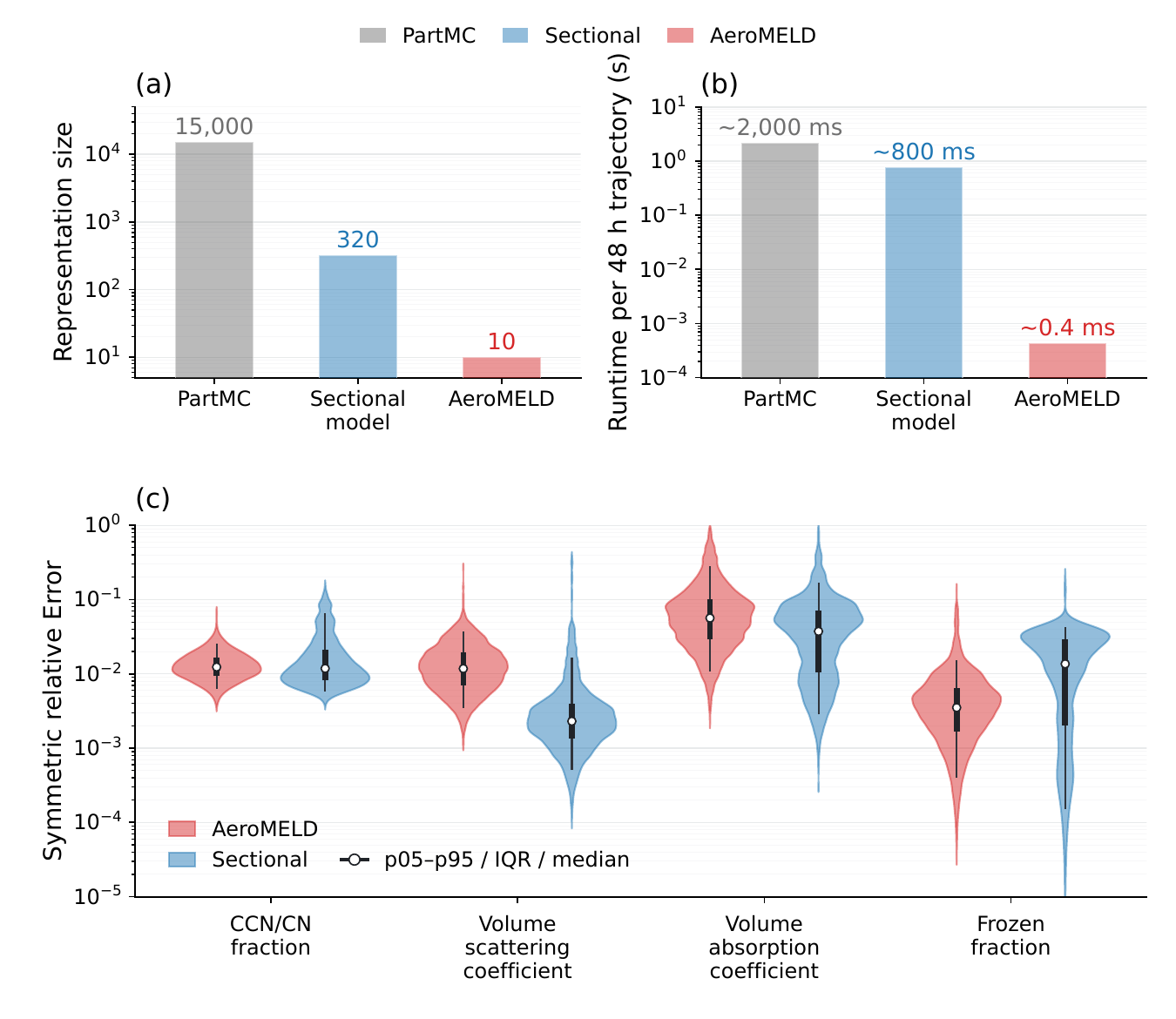}
\caption{Efficiency--accuracy comparison among particle-resolved PartMC, the
evaluated sectional model, and AeroMELD-Coag. (a) Representation size on a
logarithmic scale. PartMC is represented by the nominal
\modelparam{data.nominal_particles}{1{,}000}\(\times\)
\modelparam{data.species_count}{15} particle--species mass entries; the
sectional and AeroMELD values are fixed prognostic coordinates. (b) Mean
integration time per 48-h trajectory with 60-s steps and model output disabled:
single-core CPU time for PyPartMC and the sectional model, and batch-amortized
L40S wall time for AeroMELD. The bars use 2.1866 s, 0.7669 s, and 0.4269 ms,
respectively; annotations show rounded milliseconds. The GPU value is an
amortized throughput measurement, and the different resources preclude an
equal-hardware interpretation. (c) Pooled symmetric relative-error
distributions for CCN-to-CN ratio, volume scattering coefficient, volume
absorption coefficient, and frozen fraction over
\modelparam{split.test_trajectories}{2{,}000} held-out trajectories and all
\modelparam{data.states_per_trajectory}{49} hourly states. Each
trajectory--time pair has equal weight. Red and blue violins denote AeroMELD
and the sectional model, respectively; white circles show medians, thick black
bars show interquartile ranges, and thin black bars span the 5th to 95th
percentiles. The vertical axis is logarithmic. PartMC supplies the
particle-resolved reference and therefore has zero diagnostic error.}
\label{fig:efficiency_accuracy}
\end{figure}

Panel (a) of Figure~\ref{fig:efficiency_accuracy} shows that AeroMELD reduces
the carried state from approximately
\modelparam{baseline.partmc.nominal_particle_state_dim}{15{,}000}
particle--species mass entries in the nominal PartMC population and
\modelparam{baseline.sectional.state_dim}{320} sectional coordinates to
\modelparam{model.total_state_dim}{ten} prognostic coordinates.

Panel (b) places the matched integration times on the same logarithmic scale.
The corresponding per-case throughput is about 5,100 times PyPartMC's and
1,800 times the sectional model's under the reported CPU and GPU
configurations. These ratios do not represent equal-hardware speedups.

Panel (c) shows the diagnostic trade-off. AeroMELD has lower pooled CCN and
frozen-fraction errors, whereas the sectional model has lower scattering and
absorption errors. Across the two reduced models, CCN, scattering, and
frozen-fraction errors cluster around 1\%, and most volume-absorption errors
remain below 10\%. Thus the ten-coordinate state retains useful
mixing-state-sensitive diagnostics while substantially reducing both state
size and integration time in this benchmark.

\hypertarget{discussion-and-conclusion}{%
\section{Discussion and Conclusion}\label{discussion-and-conclusion}}

The held-out evaluation shows that, within the sampled fixed-environment
manifold, the frozen AeroMELD representation serves as both a compact
diagnostic encoding and a useful coordinate for nonlinear coagulation
dynamics. The sectional comparison adds a second result: this learned
coordinate reaches a diagnostic-accuracy regime comparable to the evaluated
conventional reduction with a much smaller prognostic state. The clearest
improvements occur in normalized size-resolved composition and frozen
fraction, and CCN activation has a narrower upper tail. Stable 48-h evolution
of latent shape and total number, coherent decoded behavior, and the three-way
efficiency--accuracy comparison provide empirical support for the present
proof of concept.

The derivation in
Section~\ref{scale-covariant-coagulation-dynamics} separates an exact
structural result from a learned approximation. Binary-coagulation
homogeneity fixes the \(s^2\) and \(s\) factors, whereas closure through
\(\mathbf z\) remains empirical. Encoding the exact factors prevents the
network from spending finite-data capacity on a known concentration law and
extends that law to positive scales outside the sampled training range. Accuracy for unseen latent shapes, environmental conditions, or processes remains to be established. The shared particle map, linear unnormalized population moment, and scale--shape dynamics implement a representation-first scientific-machine-learning strategy. They encode known set structure and operator scaling analytically, leaving only the unresolved closure to be learned.

The linear unnormalized moment also yields a concrete strategy for three-dimensional coupling. In an operator-split implementation, a host model
could advect total number \(N\) and the \modelparam{model.latent_dim}{nine} components of
\(\mathbf m_\phi=N\mathbf z\) as tracer-like prognostic quantities, using the
host model's standard concentration or mixing-ratio convention. Grid-cell
mixing and encoded emission sources would update these additive coordinates
linearly. Each cell would then recover \(\mathbf z=\mathbf m_\phi/N\), apply a
local nonlinear process operator such as the coagulation model studied here,
and convert back to \((N,\mathbf m_\phi)\) before the next transport step.
For the present coagulation operator, the one-step Euler result at 1 h is consistent with the much finer-step solutions, and no stability-driven internal subcycling appears necessary at this coupling interval. The 8 h RK4 comparison also confirms the solver flexibility of the learned continuous-time right-hand side beyond Euler. If combining multiple aerosol processes produces a more nonlinear or stiff latent system, the same state--operator interface could use higher-order integrators such as RK4 or adaptive integrators.
Particle-node graph models retain a more explicit particle description, but
their variable node sets require additional transport and remapping machinery;
the fixed-dimensional moment state offers a different, more direct interface
to Eulerian tracer infrastructure. This coupling pathway follows from the representation structure and awaits demonstration in a coupled model.

Several boundaries remain. Training and evaluation use one sampled scenario
family at fixed temperature and pressure, and distinct normalized
populations can in principle share a latent coordinate while having different
process tendencies. Broader population sampling, environmental conditioning,
and additional process operators are therefore required before coupled
application. The sectional result applies to the evaluated bin
structure, prognostic variables, numerical implementation, and hardware
benchmark. Other sectional schemes may retain different mixing-state detail or
occupy different accuracy--cost regimes. The present comparison establishes a
practical baseline for this experiment. A broader ranking requires evaluations
across additional sectional schemes, environments, and aerosol processes.

The decoded mass-retention diagnostic shows a
\modelparam{results.mass.retention_48_median_pct}{2.2\%} median symmetric drift
error at 48 h and a
\modelparam{results.mass.retention_48_p95_pct}{11.6\%} 95th percentile. The
selected lower- and upper-discrepancy cases in
Appendix~\ref{decoded-total-mass-retention} pair these scalar deviations with
much smaller errors in CCN activation, optical properties, and frozen fraction
for the same rollouts. Budget closure and coupled-model sensitivity remain
unresolved.
Mass evolution in AeroMELD-Coag remains unconstrained by a
coagulation-specific hard projection. Future systems that include emissions,
deposition, condensation, evaporation, and chemistry should apply
process-aware total- or species-mass constraints consistent with the source,
sink, phase-exchange, and conversion budgets of each operator.

For the evaluated setting, AeroMELD-Coag demonstrates that a
\modelparam{model.total_state_dim}{ten}-coordinate state can carry
diagnostically expressive, stable coagulation dynamics with median symmetric
errors of \(\modelparam{results.latent.pooled_median_pct}{3.5\%}\) for latent
shape and \(\modelparam{results.number.pooled_median_pct}{0.52\%}\) for total
number across \modelparam{split.test_trajectories}{2{,}000} held-out
trajectories. Relative to the
\modelparam{baseline.sectional.state_dim}{320}-coordinate sectional baseline,
the AeroMELD state lowers pooled mean composition and frozen-fraction errors and
narrows the CCN upper tail. The sectional model remains more accurate for
binned mass and number and for the optical coefficients. AeroMELD's
batch-amortized L40S integration throughput is about 5,100 times that of
single-core PyPartMC and 1,800 times that of the single-core sectional model
in the matched benchmark. These hardware-specific values do not
represent equal-hardware acceleration. The accuracy and efficiency results show how
particle-resolved simulation can serve as a source for learning compact
prognostic variables that
bridge particle-level diagnostic fidelity and the efficiency required for
future three-dimensional, multi-process coupling.

\section*{Acknowledgements}

This work was supported by the U.S. Department of Energy, Office of Science,
Office of Biological and Environmental Research under Award Number
DE-SC0022130, and the Laboratory Directed Research and Development program at
Sandia National Laboratories. Sandia National Laboratories is a multimission
laboratory managed and operated by National Technology and Engineering
Solutions of Sandia LLC, a wholly owned subsidiary of Honeywell International
Inc. for the U.S. Department of Energy's National Nuclear Security
Administration contract DE-NA0003525. SAND2026-262570.

This paper describes objective technical results and analysis. Any subjective
views or opinions that might be expressed in the paper do not necessarily
represent the views of the U.S. Department of Energy or the United States
Government.

\clearpage
\bibliographystyle{unsrtnat-initials}
\bibliography{references}

@article{poschl2005,
  author  = {P{\"o}schl, Ulrich},
  title   = {Atmospheric aerosols: composition, transformation, climate and health effects},
  journal = {Angewandte Chemie International Edition},
  year    = {2005},
  volume  = {44},
  number  = {46},
  pages   = {7520--7540},
  doi     = {10.1002/anie.200501122}
}

@article{mcfiggans2006,
  author  = {McFiggans, G. and Artaxo, P. and Baltensperger, U. and Coe, H. and Facchini, M. C. and Feingold, G. and Fuzzi, S. and Gysel, M. and Laaksonen, A. and Lohmann, U. and Mentel, T. F. and Murphy, D. M. and O'Dowd, C. D. and Snider, J. R. and Weingartner, E.},
  title   = {The effect of physical and chemical aerosol properties on warm cloud droplet activation},
  journal = {Atmospheric Chemistry and Physics},
  year    = {2006},
  volume  = {6},
  number  = {9},
  pages   = {2593--2649},
  doi     = {10.5194/acp-6-2593-2006}
}

@article{hoose2012,
  author  = {Hoose, C. and M{\"o}hler, O.},
  title   = {Heterogeneous ice nucleation on atmospheric aerosols: A review of results from laboratory experiments},
  journal = {Atmospheric Chemistry and Physics},
  year    = {2012},
  volume  = {12},
  number  = {20},
  pages   = {9817--9854},
  doi     = {10.5194/acp-12-9817-2012}
}

@article{fierce2016,
  author  = {Fierce, Laura and Bond, Tami C. and Bauer, Susanne E. and Mena, Francisco and Riemer, Nicole},
  title   = {Black carbon absorption at the global scale is affected by particle-scale diversity in composition},
  journal = {Nature Communications},
  year    = {2016},
  volume  = {7},
  pages   = {12361},
  doi     = {10.1038/ncomms12361}
}

@article{ching2017,
  author  = {Ching, Joseph and Fast, Jerome and West, Matthew and Riemer, Nicole},
  title   = {Metrics to quantify the importance of mixing state for {CCN} activity},
  journal = {Atmospheric Chemistry and Physics},
  year    = {2017},
  volume  = {17},
  number  = {12},
  pages   = {7445--7458},
  doi     = {10.5194/acp-17-7445-2017}
}

@article{riemer2019mixing,
  author  = {Riemer, N. and Ault, A. P. and West, M. and Craig, R. L. and Curtis, J. H.},
  title   = {Aerosol mixing state: Measurements, modeling, and impacts},
  journal = {Reviews of Geophysics},
  year    = {2019},
  volume  = {57},
  number  = {2},
  pages   = {187--249},
  doi     = {10.1029/2018RG000615}
}

@article{yao2022,
  author  = {Yao, Yu and Curtis, Jeffrey H. and Ching, Joseph and Zheng, Zhonghua and Riemer, Nicole},
  title   = {Quantifying the effects of mixing state on aerosol optical properties},
  journal = {Atmospheric Chemistry and Physics},
  year    = {2022},
  volume  = {22},
  number  = {14},
  pages   = {9265--9282},
  doi     = {10.5194/acp-22-9265-2022}
}

@article{tang2026freezing,
  author  = {Tang, Wenhan and Arabas, Sylwester and Curtis, Jeffrey H. and Knopf, Daniel A. and West, Matthew and Riemer, Nicole},
  title   = {The impact of aerosol mixing state on immersion freezing: insights from classical nucleation theory and particle-resolved simulations},
  journal = {Atmospheric Chemistry and Physics},
  year    = {2026},
  volume  = {26},
  number  = {12},
  pages   = {9221--9255},
  doi     = {10.5194/acp-26-9221-2026}
}

@article{riemer2009partmc,
  author  = {Riemer, Nicole and West, Matthew and Zaveri, Rahul A. and Easter, Richard C.},
  title   = {Simulating the evolution of soot mixing state with a particle-resolved aerosol model},
  journal = {Journal of Geophysical Research: Atmospheres},
  year    = {2009},
  volume  = {114},
  number  = {D9},
  pages   = {D09202},
  doi     = {10.1029/2008JD011073}
}

@article{zaveri2008mosaic,
  author  = {Zaveri, Rahul A. and Easter, Richard C. and Fast, Jerome D. and Peters, Leonard K.},
  title   = {Model for Simulating Aerosol Interactions and Chemistry ({MOSAIC})},
  journal = {Journal of Geophysical Research: Atmospheres},
  year    = {2008},
  volume  = {113},
  number  = {D13},
  pages   = {D13204},
  doi     = {10.1029/2007JD008782}
}

@article{binkowski1995rpm,
  author  = {Binkowski, Francis S. and Shankar, Uma},
  title   = {The regional particulate matter model: 1. Model description and preliminary results},
  journal = {Journal of Geophysical Research: Atmospheres},
  year    = {1995},
  volume  = {100},
  number  = {D12},
  pages   = {26191--26209},
  doi     = {10.1029/95JD02093}
}

@article{whitby1997modal,
  author  = {Whitby, Evan R. and McMurry, Peter H.},
  title   = {Modal aerosol dynamics modeling},
  journal = {Aerosol Science and Technology},
  year    = {1997},
  volume  = {27},
  number  = {6},
  pages   = {673--688},
  doi     = {10.1080/02786829708965504}
}

@article{gelbard1980sectional,
  author  = {Gelbard, Fred and Tambour, Yoram and Seinfeld, John H.},
  title   = {Sectional representations for simulating aerosol dynamics},
  journal = {Journal of Colloid and Interface Science},
  year    = {1980},
  volume  = {76},
  number  = {2},
  pages   = {541--556},
  doi     = {10.1016/0021-9797(80)90394-X}
}

@article{vignati2004m7,
  author  = {Vignati, Elisabetta and Wilson, Julian and Stier, Philip},
  title   = {{M7}: An efficient size-resolved aerosol microphysics module for large-scale aerosol transport models},
  journal = {Journal of Geophysical Research: Atmospheres},
  year    = {2004},
  volume  = {109},
  number  = {D22},
  pages   = {D22202},
  doi     = {10.1029/2003JD004485}
}

@article{bauer2008matrix,
  author  = {Bauer, S. E. and Wright, D. L. and Koch, D. and Lewis, E. R. and McGraw, R. and Chang, L.-S. and Schwartz, S. E. and Ruedy, R.},
  title   = {{MATRIX} ({Multiconfiguration Aerosol TRacker of mIXing state}): an aerosol microphysical module for global atmospheric models},
  journal = {Atmospheric Chemistry and Physics},
  year    = {2008},
  volume  = {8},
  number  = {20},
  pages   = {6003--6035},
  doi     = {10.5194/acp-8-6003-2008}
}

@article{wang2022coagulation,
  author  = {Wang, Justin L. and Curtis, Jeffrey H. and Riemer, Nicole and West, Matthew},
  title   = {Learning coagulation processes with combinatorial neural networks},
  journal = {Journal of Advances in Modeling Earth Systems},
  year    = {2022},
  volume  = {14},
  number  = {12},
  pages   = {e2022MS003252},
  doi     = {10.1029/2022MS003252}
}

@article{ferracina2025gnn,
  author  = {Ferracina, Fabiana and Beeler, Payton and Halappanavar, Mahantesh and Krishnamoorthy, Bala and Minutoli, Marco and Fierce, Laura},
  title   = {Learning to simulate aerosol dynamics with graph neural networks},
  journal = {ACS ES\&T Air},
  year    = {2025},
  volume  = {2},
  number  = {8},
  pages   = {1426--1438},
  doi     = {10.1021/acsestair.4c00261}
}

@article{saleh2025generative,
  author  = {Saleh, Ehsan and Ghaffari, Saba and Curtis, Jeffrey H. and Patel, Lekha and Bosler, Peter A. and Riemer, Nicole and West, Matthew},
  title   = {Compact aerosol size-composition representations preserving {CCN}, optical, and ice nucleation properties},
  journal = {Aerosol Science and Technology},
  year    = {2026},
  volume  = {60},
  number  = {10},
  pages   = {1197--1217},
  doi     = {10.1080/02786826.2026.2695176}
}

@article{saleh2025partial,
  author  = {Saleh, Ehsan and Ghaffari, Saba and Curtis, Jeffrey H. and Patel, Lekha and Bosler, Peter A. and Riemer, Nicole and West, Matthew},
  title   = {Reconstructing the aerosol state from partial observations with generative modeling},
  journal = {Artificial Intelligence for the Earth Systems},
  year    = {2026},
  volume  = {5},
  number  = {4},
  doi     = {10.1175/AIES-D-25-0101.1}
}

@article{saleh2026aeromeld,
  author  = {Saleh, Ehsan and Ghaffari, Saba and Tang, Wenhan and Curtis, Jeffrey H. and Patel, Lekha and Bosler, Peter A. and Riemer, Nicole and West, Matthew},
  title   = {{AeroMELD}: A linear embedding of aerosol populations for diagnostics and latent dynamics},
  journal = {arXiv preprint arXiv:2607.11073},
  year    = {2026},
  doi     = {10.48550/arXiv.2607.11073}
}

@inproceedings{zaheer2017deepsets,
  author    = {Zaheer, Manzil and Kottur, Satwik and Ravanbakhsh, Siamak and P{\'o}czos, Barnab{\'a}s and Salakhutdinov, Ruslan and Smola, Alexander J.},
  title     = {Deep Sets},
  booktitle = {Advances in Neural Information Processing Systems},
  volume    = {30},
  year      = {2017}
}

@inproceedings{chen2018neuralode,
  author    = {Chen, Ricky T. Q. and Rubanova, Yulia and Bettencourt, Jesse and Duvenaud, David K.},
  title     = {Neural ordinary differential equations},
  booktitle = {Advances in Neural Information Processing Systems},
  volume    = {31},
  pages     = {6571--6583},
  year      = {2018}
}

@book{villani2009optimal,
  author    = {Villani, C{\'e}dric},
  title     = {Optimal Transport: Old and New},
  series    = {Grundlehren der mathematischen Wissenschaften},
  volume    = {338},
  publisher = {Springer},
  address   = {Berlin, Heidelberg},
  year      = {2009},
  doi       = {10.1007/978-3-540-71050-9}
}

@article{gasparik2020scenario,
  author  = {Gasparik, J. T. and Ye, Q. and Curtis, J. H. and Presto, A. A. and Donahue, N. M. and Sullivan, R. C. and West, M. and Riemer, N.},
  title   = {Quantifying errors in the aerosol mixing-state index based on limited particle sample size},
  journal = {Aerosol Science and Technology},
  year    = {2020},
  volume  = {54},
  number  = {12},
  pages   = {1527--1541},
  doi     = {10.1080/02786826.2020.1804523}
}

@article{riemer2010aging,
  author  = {Riemer, Nicole and West, Matthew and Zaveri, Rahul and Easter, Richard},
  title   = {Estimating black carbon aging time-scales with a particle-resolved aerosol model},
  journal = {Journal of Aerosol Science},
  year    = {2010},
  volume  = {41},
  number  = {1},
  pages   = {143--158},
  doi     = {10.1016/j.jaerosci.2009.08.009}
}

@article{niemand2012inas,
  author  = {Niemand, Monika and M{\"o}hler, Ottmar and Vogel, Bernhard and Vogel, Heike and Hoose, Corinna and Connolly, Paul and Klein, Holger and Bingemer, Heinz and DeMott, Paul and Skrotzki, Julian and Leisner, Thomas},
  title   = {A particle-surface-area-based parameterization of immersion freezing on desert dust particles},
  journal = {Journal of the Atmospheric Sciences},
  year    = {2012},
  volume  = {69},
  number  = {10},
  pages   = {3077--3092},
  doi     = {10.1175/JAS-D-11-0249.1}
}

@article{schill2020blackcarbon,
  author  = {Schill, Gregory P. and DeMott, Paul J. and Emerson, Ethan W. and Rauker, Anne Marie C. and Kodros, John K. and Suski, Kaitlyn J. and Hill, Thomas C. J. and Levin, Ezra J. T. and Pierce, Jeffrey R. and Farmer, Delphine K. and Kreidenweis, Sonia M.},
  title   = {The contribution of black carbon to global ice nucleating particle concentrations relevant to mixed-phase clouds},
  journal = {Proceedings of the National Academy of Sciences},
  year    = {2020},
  volume  = {117},
  number  = {37},
  pages   = {22705--22711},
  doi     = {10.1073/pnas.2001674117}
}

@article{brenowitz2018prognostic,
  author  = {Brenowitz, N. D. and Bretherton, C. S.},
  title   = {Prognostic Validation of a Neural Network Unified Physics Parameterization},
  journal = {Geophysical Research Letters},
  volume  = {45},
  number  = {12},
  pages   = {6289--6298},
  year    = {2018},
  doi     = {10.1029/2018GL078510}
}

@inproceedings{loshchilov2019decoupled,
  author    = {Loshchilov, Ilya and Hutter, Frank},
  title     = {Decoupled Weight Decay Regularization},
  booktitle = {International Conference on Learning Representations},
  year      = {2019}
}

\clearpage
\appendix

\hypertarget{supporting-scale-shape-results}{%
\section{Supporting scale--shape results}\label{supporting-scale-shape-results}}

\begin{proposition}[No recurrence of the complete coagulation state]
If \(N>0\), \(K\ge0\), and \(\iint K\,d\widehat\mu d\widehat\mu>0\) along an interval, then \(N\) decreases strictly and the complete state \((N,\mathbf z)\) cannot recur on that interval.
\end{proposition}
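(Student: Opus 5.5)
The plan rests on the total-number identity already established in Theorem~\ref{thm:scale-covariant-dynamics}, namely \(dN/dt=N^2a(\widehat\mu)\) with \(a(\widehat\mu)=-\tfrac12\iint K\,d\widehat\mu\,d\widehat\mu\). On the interval in question \(N(t)>0\) and the collision integral \(\iint K\,d\widehat\mu_t d\widehat\mu_t\) is strictly positive by hypothesis. Since \(K\ge0\) also makes that integral nonnegative wherever it is defined, we get \(a(\widehat\mu_t)<0\) and hence \(dN/dt=N^2a(\widehat\mu_t)<0\) at every time of the interval.

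Strict monotonicity then follows from the mean value theorem, or equivalently by integrating the derivative. For \(t_1<t_2\) in the interval,
\[
N(t_2)-N(t_1)=\int_{t_1}^{t_2}N(u)^2a(\widehat\mu_u)\,du<0,
\]
because the integrand is strictly negative pointwise. The only care needed is regularity: we want \(N\) to be absolutely continuous, or \(C^1\), so that the weak Smoluchowski equation with \(\psi=1\) holds in integrated form. This is the standing assumption under which Theorem~\ref{thm:scale-covariant-dynamics} was stated, and a pointwise negative derivative on an interval already suffices for strict decrease by the mean value theorem.

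Non-recurrence is then immediate. If \((N(t_1),\mathbf z(t_1))=(N(t_2),\mathbf z(t_2))\) for some \(t_1<t_2\) in the interval, the first coordinates would agree, \(N(t_1)=N(t_2)\), contradicting strict decrease. So no state of the trajectory recurs, and the full \((N,\mathbf z)\) path cannot be periodic or revisit a state, whatever \(\mathbf z\) does.

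I expect no step to be a genuine obstacle. The one point worth stating explicitly is the use of the hypothesis on the \emph{normalized} collision integral: it is strictly positive at every time of the interval, not merely somewhere, which is what the pointwise sign argument requires. If only almost-everywhere positivity were available, the integrated form above would still give strict decrease, so I would phrase the argument through that integral to cover both readings. I would also remark that \(\mathbf z\) alone may recur; only the pair is excluded, and that is exactly what the statement claims.
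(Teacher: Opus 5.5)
Your proposal is correct and follows the paper's proof: the paper likewise combines \(dN/dt=N^2a(\widehat\mu)\) with the sign of \(a(\widehat\mu)=-\tfrac12\iint K\,d\widehat\mu\,d\widehat\mu\) to get \(dN/dt<0\), and rules out recurrence of \((N,\mathbf z)\) because it would require the same value of \(N\) at two distinct times. Your extra remarks on regularity and on the integrated (almost-everywhere) form are harmless refinements that the paper leaves implicit.
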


\begin{proof}
Combining the first identity in Eq.~\eqref{eq:scale-shape-functionals} with the definition of \(a(\widehat\mu)\) in Eq.~\eqref{eq:normalized-tendency-functionals} gives \(dN/dt<0\) under the stated conditions. The complete state cannot revisit an earlier value because
that would require the same value of \(N\) at two distinct times.
\end{proof}

The projected curve \(\mathbf z(t)\) may still bend, cross itself, or return to a previous projected value at a different \(N\).

\begin{proposition}[Regularity inherited by the latent path]
If each coordinate \(\phi_j\) is Lipschitz with constant \(L_j\), then
\begin{equation}
\|\mathbf z(t_2)-\mathbf z(t_1)\|_2
\le\left(\sum_jL_j^2\right)^{1/2}
W_1(\widehat\mu_{t_2},\widehat\mu_{t_1}).
\label{eq:appendix-wasserstein-vector}
\end{equation}
\end{proposition}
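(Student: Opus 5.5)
The plan is to prove the bound one coordinate at a time using Kantorovich--Rubinstein duality, and then assemble the coordinates with the Euclidean norm. By Eq.~\eqref{eq:latent-shape}, each component is a barycenter of the normalized population,
\[
z_j(t)=\int_{\mathcal X}\phi_j(x)\,d\widehat\mu_t(x),
\]
so the difference is
\[
z_j(t_2)-z_j(t_1)=\int\phi_j\,d\widehat\mu_{t_2}-\int\phi_j\,d\widehat\mu_{t_1}.
\]
This is the integral of a single Lipschitz test function against the difference of two probability measures. Both \(\widehat\mu_{t_1}\) and \(\widehat\mu_{t_2}\) are probability measures whenever \(N>0\).

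First I fix the metric on \(\mathcal X=\mathbb R_+^{15}\) with respect to which the Lipschitz constants \(L_j\) and \(W_1\) are both defined; I take the Euclidean distance. In the present setting the measures are finite sums of Dirac masses, so they have finite first moments and \(W_1\) is finite.

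Next I rescale each coordinate. If \(L_j>0\), then \(\phi_j/L_j\) is \(1\)-Lipschitz, and the Kantorovich--Rubinstein duality \cite{villani2009optimal} gives
\[
|z_j(t_2)-z_j(t_1)|\le L_j\,W_1(\widehat\mu_{t_2},\widehat\mu_{t_1}).
\]
If \(L_j=0\), then \(\phi_j\) is constant and both sides vanish.

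For a more elementary route, I can take an optimal coupling \(\pi\) of the two normalized measures. Integrating \(|\phi_j(x)-\phi_j(y)|\le L_j|x-y|\) against \(\pi\) gives the same bound directly, and it uses only the primal definition of \(W_1\).

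Finally, I square the per-coordinate bounds, sum over \(j=1,\ldots,d_z\), and take the square root. This yields Eq.~\eqref{eq:appendix-wasserstein-vector} with the constant \(\bigl(\sum_jL_j^2\bigr)^{1/2}\).

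The only point needing care is the metric compatibility just mentioned. The Encoder actually applies its network to \emph{transformed} species masses. If the Lipschitz constants are stated in those transformed coordinates, then \(W_1\) must be taken in the same coordinates, or equivalently the constants must be pulled back to the original coordinates. I will state this convention explicitly rather than attempt any further estimate. Everything else follows directly from linearity of integration and the definition of \(W_1\).
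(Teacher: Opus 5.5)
Your proposal is correct and follows the paper's proof: you apply Kantorovich--Rubinstein duality to each coordinate \(\phi_j\) to get \(|z_j(t_2)-z_j(t_1)|\le L_jW_1(\widehat\mu_{t_2},\widehat\mu_{t_1})\), then square, sum over \(j\), and take the square root. Your additional remarks add care beyond the paper's terse argument without changing the method: the \(L_j=0\) case, the elementary coupling route, and stating \(W_1\) in the same metric as the Lipschitz constants.
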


\begin{proof}
Kantorovich--Rubinstein duality gives \(|\int\phi_jd\widehat\mu_{t_2}-\int\phi_jd\widehat\mu_{t_1}|\le L_jW_1(\widehat\mu_{t_2},\widehat\mu_{t_1})\) for every coordinate \cite{villani2009optimal}. Squaring, summing over coordinates, and taking the square root yields Eq.~\eqref{eq:appendix-wasserstein-vector}.
\end{proof}

\hypertarget{frozen-decoder-target-properties}{%
\section{Frozen-decoder target properties}\label{frozen-decoder-target-properties}}

\begin{lemma}[Residual neutrality of the frozen-decoder reference target]
\label{lem:residual-neutrality}
If \(\widehat{\mathbf z}_{i,k+h}=\mathbf z^\star_{i,k+h}\), then \(\ell_{\mathrm{FDT},i,k}^{(h)}=0\), independently of the frozen Encoder--Decoder residual relative to the raw physical state.
\end{lemma}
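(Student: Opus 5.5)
The plan is to argue directly from the definition of the per-window endpoint loss in Eq.~\eqref{eq:frozen-decoder-target}. The key observation is that both branches pass through the same frozen neural map \(D_u\), and that \(\operatorname{stopgrad}\) acts as the identity in the forward pass. The loss value is therefore determined entirely by the two latent arguments \(\widehat{\mathbf z}_{i,k+h}\) and \(\mathbf z^\star_{i,k+h}\). No raw physical reference quantity appears anywhere in it.

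The steps are as follows. First, fix a block \(q\in\mathcal Q\). Write the forward value of the reference branch as \(\operatorname{stopgrad}[D_{u,q}(\mathbf z^\star_{i,k+h})]=D_{u,q}(\mathbf z^\star_{i,k+h})\), which holds by the stated forward-pass identity property. Second, use the hypothesis \(\widehat{\mathbf z}_{i,k+h}=\mathbf z^\star_{i,k+h}\) and the fact that \(D_{u,q}\) is a fixed deterministic function; the Decoder weights are frozen, and dropout is absent from the Decoder. Together these give \(D_{u,q}(\widehat{\mathbf z}_{i,k+h})=D_{u,q}(\mathbf z^\star_{i,k+h})\), so each squared block norm vanishes. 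Third, sum the zero terms with the positive weights \(1/(|\mathcal Q|d_q)\) to conclude \(\ell_{\mathrm{FDT},i,k}^{(h)}=0\).

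For the independence claim, I will note that the frozen Encoder--Decoder residual is the discrepancy between the physical reconstruction \(\mathcal D(\mathbf z^\star,N^\star)\) and the raw PartMC output \(\mathbf q^\star\). It is a function of quantities, \(\mathbf q^\star\) and the inverse block transformations, that do not enter Eq.~\eqref{eq:frozen-decoder-target}. Hence the loss is zero whatever value that residual takes. By contrast, a loss comparing \(D_u(\widehat{\mathbf z})\) with the transformed raw outputs would equal the residual norm at \(\widehat{\mathbf z}=\mathbf z^\star\).

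There is no real obstacle here. The only point needing care is to make explicit that the Decoder is evaluated deterministically, with frozen weights and in evaluation mode, so that identical inputs yield identical outputs in both branches. It should also be made explicit that \(\operatorname{stopgrad}\) affects only gradients and not values.
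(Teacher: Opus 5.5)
Your proposal is correct and follows essentially the same route as the paper: equal latent endpoints give identical inputs to every block of the same deterministic frozen Decoder, and \(\operatorname{stopgrad}\) changes only the backward graph, not the forward value. Hence every term of the endpoint loss vanishes. Your independence argument is the same one the paper gives: the raw-physical reference never enters the loss, and you contrast this with a raw-physical target, which can be nonzero at the reference latent state.
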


\begin{proof}
Equality of the two latent endpoints gives identical inputs to every block of the same deterministic frozen Decoder. Their forward values are therefore equal before the stop-gradient operation, which changes the backward graph but not the reference value. Every term in Eq.~\eqref{eq:frozen-decoder-target} is zero. With a direct raw-physical target, the frozen representation residual remains, and the loss may be nonzero at the reference latent state.
\end{proof}

For the local interpretation, write \(\widehat{\mathbf z}=\mathbf z^\star+\boldsymbol\delta\) and let \(J_q(\mathbf z^\star)\) be the Jacobian of Decoder block \(D_{u,q}\). If the Decoder is twice continuously differentiable in a neighborhood of \(\mathbf z^\star\), then
\begin{equation}
\ell_{\mathrm{FDT},i,k}^{(h)}
=\boldsymbol\delta^\top
\left[
\frac1{|\mathcal Q|}\sum_{q\in\mathcal Q}\frac1{d_q}
J_q(\mathbf z^\star)^\top J_q(\mathbf z^\star)
\right]
\boldsymbol\delta
+\mathcal O(\|\boldsymbol\delta\|_2^3).
\label{eq:fdt-local-metric}
\end{equation}
The bracketed matrix is positive semidefinite and weights latent errors by local Decoder sensitivity. It need not be positive definite because the Decoder is not assumed to be locally injective.

\hypertarget{architecture-transformations-and-training-objective}{%
\section{Architecture, transformations, and training objective}\label{architecture-transformations-and-training-objective}}

\setcounter{table}{0}
\renewcommand{\thetable}{C\arabic{table}}
\renewcommand{\theHtable}{C.\arabic{table}}

\begin{table}[H]
\centering
\caption{Model architecture used in the selected experiment.}
\label{tab:model-architecture}
\small
\begin{tabularx}{\textwidth}{@{}>{\raggedright\arraybackslash}X>{\raggedright\arraybackslash}X@{}}
\toprule
Component & Configuration \\
\midrule
Particle state & \modelparam{data.species_count}{15} species masses plus represented-number weights \\
Particle network & \modelparam{encoder.hidden_layers}{Two} \modelparam{encoder.hidden_width}{128}-unit ReLU hidden layers \\
Set aggregation & Number-weighted mean \\
Latent shape & Deterministic mean, \(d_z=\modelparam{model.latent_dim}{9}\) \\
Decoder & \modelparam{decoder.hidden_layers}{Two} \modelparam{decoder.hidden_width}{128}-unit ReLU hidden layers; \modelparam{decoder.output_dim}{550} outputs \\
Latent-dynamics input & \(\mathbf z\in\mathbb R^{\modelparam{model.latent_dim}{9}}\) \\
Latent-dynamics network & \modelparam{dynamics.hidden_layers}{Two} \modelparam{dynamics.hidden_width}{64}-unit ReLU layers; dropout \modelparam{dynamics.dropout}{0.1} \\
Joint output & One affine \modelparam{model.total_state_dim}{10}-output layer split into \modelparam{model.latent_dim}{9} shape-rate and 1 scale-rate components \\
Latent-dynamics parameters & \modelparam{dynamics.parameter_count}{5{,}450} \\
Integrator & Forward Euler, \(\Delta t=\modelparam{training.integration_step_h}{1}\) h \\
\bottomrule
\end{tabularx}
\end{table}

\begin{table}[H]
\centering
\caption{Diagnostic-output and rate transformations.}
\label{tab:diagnostic-transformations}
\small
\begin{tabularx}{\textwidth}{@{}>{\raggedright\arraybackslash}Xccc@{}}
\toprule
Block & Dimension & Divide by \(N\) & Signed-power exponent \\
\midrule
Binned mass \(M_b\) & \modelparam{decoder.mass_bins}{20} & Yes & \modelparam{transform.mass_power}{0.3} \\
Normalized composition \(P_{ab}\) & \(\modelparam{decoder.mass_bins}{20}\times\modelparam{decoder.composition_species}{15}\) & No & \modelparam{transform.composition_power}{0.5} \\
Binned number \(N_b\) & \modelparam{decoder.mass_bins}{20} & Yes & \modelparam{transform.number_power}{0.5} \\
CCN-to-CN ratio & \modelparam{decoder.ccn_points}{100} & No & \modelparam{transform.ccn_power}{1.0} \\
Scattering & \modelparam{decoder.optical_points}{5} & Yes & \modelparam{transform.scattering_power}{0.1} \\
Absorption & \modelparam{decoder.optical_points}{5} & Yes & \modelparam{transform.absorption_power}{0.1} \\
Frozen fraction & \modelparam{decoder.frozen_points}{100} & No & \modelparam{transform.frozen_power}{0.1} \\
Latent rate & \modelparam{model.latent_dim}{9} & Not applicable & \modelparam{transform.latent_rate_power}{1.0} \\
Scale rate & 1 & Not applicable & \modelparam{transform.scale_rate_power}{1.0} \\
\bottomrule
\end{tabularx}
\end{table}

Each transformed block is standardized using statistics fitted on the training partition. The fitted parameters remain fixed during training and evaluation and are also used in the corresponding inverse transformations.

\begin{table}[H]
\centering
\caption{Latent-dynamics training objective and optimization.}
\label{tab:training-objective}
\small
\begin{tabularx}{\textwidth}{@{}>{\raggedright\arraybackslash}X>{\raggedright\arraybackslash}X@{}}
\toprule
Setting & Value \\
\midrule
Horizons and windows & \modelparam{training.horizon_min_h}{1}--\modelparam{training.horizon_max_h}{12} h; all eligible stride-one windows \\
Integration step & \modelparam{training.integration_step_h}{1} h \\
Rate objective & MSE of predicted and corrected-reference window-mean transformed rates \\
Decoded objective & Equal-block endpoint FDT \\
Horizon weighting & Equal total training weight across horizons; final partial batches retained and normalized by their actual window count \\
Family weighting & Latent-rate, scale-rate, and complete endpoint FDT families each \(\modelparam{training.family_weight}{1/3}\); direct state-rollout weights are \modelparam{training.direct_rollout_weight}{zero} \\
FDT blocks & \modelparam{decoder.block_count}{Seven} blocks, equally weighted \\
Training perturbation & One initial Gaussian draw, scale \modelparam{training.perturbation_scale}{0.1} in zero-centered transformed latent-rate coordinates \\
Target correction & Exact subtraction from the first latent-rate target \\
Perturbation lifecycle & Disabled for validation, test, and simulation \\
Optimizer & AdamW; learning rate \(\modelparam{training.learning_rate}{10^{-4}}\); weight decay \(\modelparam{training.weight_decay}{10^{-4}}\) \\
Batch and clipping & \modelparam{training.batch_size}{64}; gradient-norm clip \modelparam{training.gradient_clip}{1.0} \\
\bottomrule
\end{tabularx}
\end{table}

\hypertarget{evaluation-and-supplementary-results}{%
\section{Evaluation and supplementary results}\label{evaluation-and-supplementary-results}}

\setcounter{table}{0}
\renewcommand{\thetable}{D\arabic{table}}
\renewcommand{\theHtable}{D.\arabic{table}}
\setcounter{figure}{0}
\renewcommand{\thefigure}{D\arabic{figure}}
\renewcommand{\theHfigure}{D.\arabic{figure}}

\hypertarget{diagnostic-evaluation-domains}{%
\subsection{Diagnostic evaluation domains}\label{diagnostic-evaluation-domains}}

\begin{table}[H]
\centering
\caption{Diagnostic domains used in the symmetric-error calculation.}
\label{tab:evaluation-domains}
\small
\begin{tabularx}{\textwidth}{@{}>{\raggedright\arraybackslash}X>{\raggedright\arraybackslash}X>{\raggedright\arraybackslash}X@{}}
\toprule
Quantity & Evaluation domain & Pre-metric transformation \\
\midrule
\(\mathbf z\) & All 9 coordinates & None \\
\(N\) & Physical total number & None \\
\(M_b\) & All 20 size bins & None \\
\(P_{ab}\) & All \(20\times15\) entries & None; Frobenius norm \\
\(N_b\) & All 20 size bins & None \\
CCN-to-CN & Supersaturation \(0.1\)--\(0.6\%\) & None \\
Scattering & 300--1,000 nm & \(\mathrm m^{-1}\) to \(\mathrm{Mm}^{-1}\); \(\log_{10}[\max(q,0.1)]\) \\
Absorption & 300--1,000 nm & \(\mathrm m^{-1}\) to \(\mathrm{Mm}^{-1}\); \(\log_{10}[\max(q,0.1)]\) \\
Frozen fraction & \(-25\) to \(-10\ ^\circ\mathrm C\) & \(\log_{10}(q+10^{-30})\) \\
\bottomrule
\end{tabularx}
\end{table}

\hypertarget{illustrative-rollouts-across-error-distribution}{%
\subsection{Illustrative rollouts across the held-out error distribution}
\label{illustrative-rollouts-across-error-distribution}}

To complement the 50th-percentile trajectory in
Figure~\ref{fig:coag_rollout_example},
Figures~\ref{fig:coag_rollout_lower_error}
and~\ref{fig:coag_rollout_upper_tail} show held-out trajectories at the 10th-
and 90th-percentile positions in the same evaluation-only composite MSE ranking.
The case-selection score gives equal weight to latent shape, total number, and
the seven decoded quantity groups and averages across the evaluated times.
These percentile labels describe relative positions in the empirical ranking: the 10th-percentile case is a lower-error example, and the 90th-percentile case characterizes upper-tail behavior below the maximum held-out error.

\begin{figure}[H]
\centering
\includegraphics[width=\textwidth]{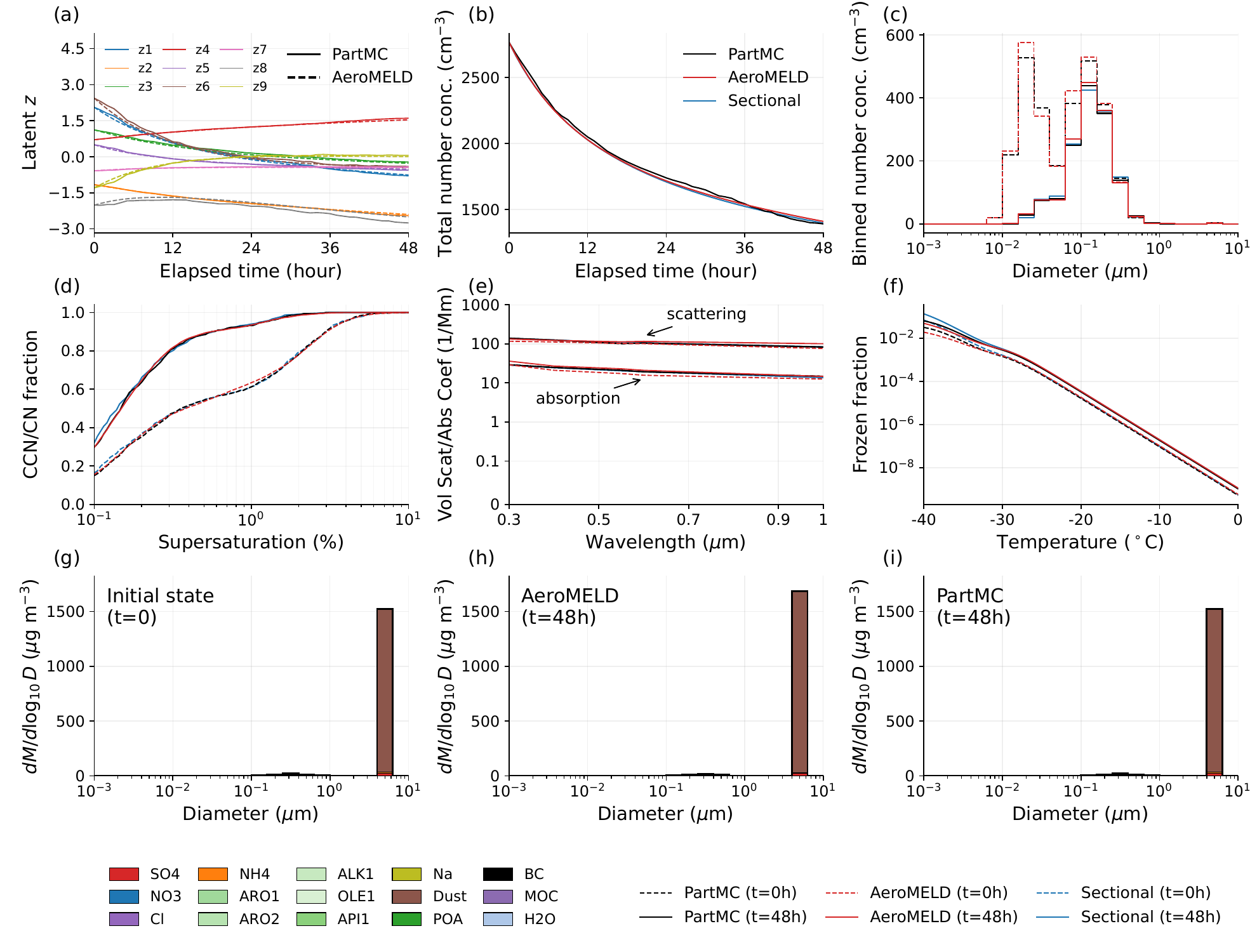}
\caption{Illustrative 48-h coagulation rollout for the held-out case at the
10th-percentile position in the same evaluation-only composite MSE ranking
used to select the 50th-percentile case in
Figure~\ref{fig:coag_rollout_example}; lower percentiles indicate smaller
composite errors. (a) Nine latent-shape coordinates; solid curves denote
encoded PartMC reference states and dashed curves denote the latent-dynamics
rollout. (b) Physical total number concentration, with the reference in black,
the prediction in red, and the sectional-model result in blue. (c--f) Binned
number concentration, CCN-to-CN ratio, aerosol optical properties, and frozen
fraction; dashed and solid curves denote the initial and 48-h states,
respectively, while black, red, and blue denote reference, prediction, and
sectional-model results. Arrows distinguish scattering and absorption
in the optical panel. (g--i) Size-resolved composition at the initial reference
state, predicted 48-h state, and reference 48-h state. Black outlines show
total binned mass \(M_b\), and colors show normalized species fractions
\(P_{ab}\).}
\label{fig:coag_rollout_lower_error}
\end{figure}

\begin{figure}[H]
\centering
\includegraphics[width=\textwidth]{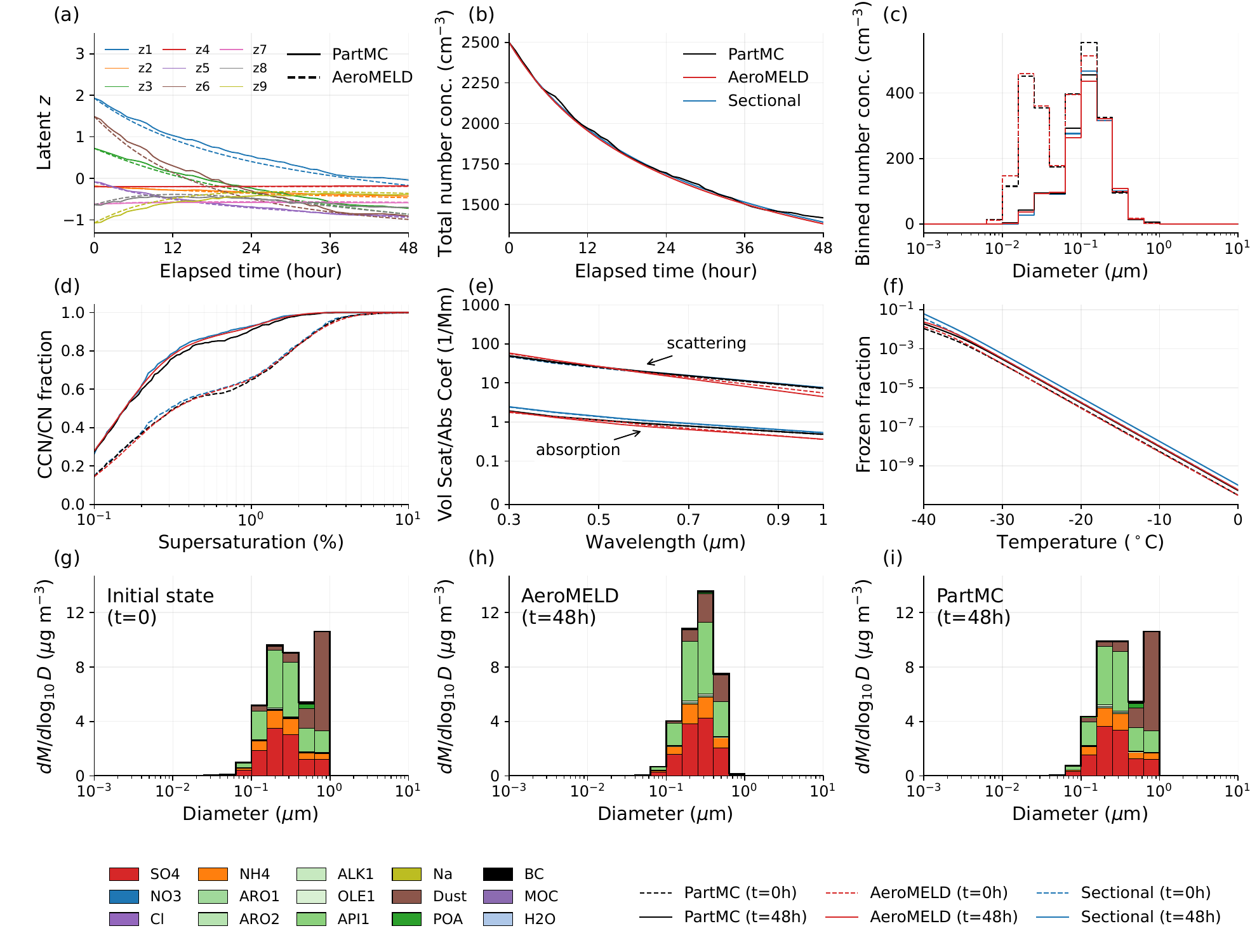}
\caption{Illustrative 48-h coagulation rollout for the held-out case at the
90th-percentile position in the same evaluation-only composite MSE ranking
used to select the 50th-percentile case in
Figure~\ref{fig:coag_rollout_example}. This position characterizes a relatively
high-error rollout below the maximum held-out error. (a) Nine
latent-shape coordinates; solid curves denote encoded PartMC reference states
and dashed curves denote the latent-dynamics rollout. (b) Physical total
number concentration, with the reference in black, the prediction in red, and
the sectional-model result in blue. (c--f) Binned number concentration,
CCN-to-CN ratio, aerosol optical properties, and frozen fraction; dashed and
solid curves denote the initial and 48-h states, respectively, while black,
red, and blue denote reference, prediction, and sectional-model results. Arrows
distinguish scattering and absorption in the optical
panel. (g--i) Size-resolved composition at the initial reference state,
predicted 48-h state, and reference 48-h state. Black outlines show total
binned mass \(M_b\), and colors show normalized species fractions
\(P_{ab}\).}
\label{fig:coag_rollout_upper_tail}
\end{figure}

Across the three percentile-ranked examples, total number decreases and the
size distribution shifts toward larger particles. Visible discrepancies
increase toward the upper-tail case, especially in the total-number path,
binned number distribution, and optical and CCN responses. These examples connect empirical ranking position to qualitative rollout behavior; the population-level error distributions in Figures~\ref{fig:error_distributions} and~\ref{fig:error_vs_time} remain the basis for aggregate evaluation.

\hypertarget{ed-reconstruction-and-ld-discrepancy}{%
\subsection{Separating Encoder--Decoder reconstruction error from
latent-dynamics rollout discrepancy}\label{ed-reconstruction-and-ld-discrepancy}}

The model-space comparison separates three transformed decoded states at each
held-out trajectory and time: raw transformed truth \(\mathbf u^\star\), the
frozen Encoder--Decoder reconstruction \(D_u(\mathbf z^\star)\), and the
decoded latent-dynamics rollout \(D_u(\widehat{\mathbf z})\). We refer to the
difference between the first two as the Encoder--Decoder (ED) reconstruction
residual and to the additional difference produced by the rollout as the
latent-dynamics (LD) discrepancy:
\begin{equation}
\mathbf r_{\mathrm{ED}}=D_u(\mathbf z^\star)-\mathbf u^\star,
\qquad
\mathbf r_{\mathrm{LD}}=D_u(\widehat{\mathbf z})-D_u(\mathbf z^\star),
\qquad
\mathbf r_{\mathrm{total}}=\mathbf r_{\mathrm{ED}}+\mathbf r_{\mathrm{LD}}.
\label{eq:ed-ld-residuals}
\end{equation}
Because the decoded blocks are standardized on the training partition, Figure~\ref{fig:ed_ld_decomposition} reports root-mean-square residuals after equal averaging within each block and across the \modelparam{decoder.block_count}{seven} blocks. This metric is aligned with the transformed Decoder output and is distinct from the raw-physical symmetric error used in Figures~\ref{fig:error_distributions}--\ref{fig:error_vs_time}.

In the pooled comparison, reconstruction residuals exceed rollout-induced discrepancies by a factor of \modelparam{results.ed_ld.pooled_ratio}{4.9} and are larger in all \modelparam{decoder.block_count}{seven} decoded blocks. The reconstruction-to-rollout ratio is also at least three at 12, 24, and 48 h. Thus the frozen representation sets the principal transformed-output floor in the pooled diagnostic and at these preregistered lead times, with a smaller additional discrepancy from the learned dynamics.

Squared total error also contains the cross term \(2\langle\mathbf r_{\mathrm{ED}},\mathbf r_{\mathrm{LD}}\rangle\). We compute and report that term in the accompanying data artifact. The residual-magnitude ratios compare magnitudes; additive percentages of total error also require the cross term.

\begin{figure}[H]
\centering
\includegraphics[width=\textwidth]{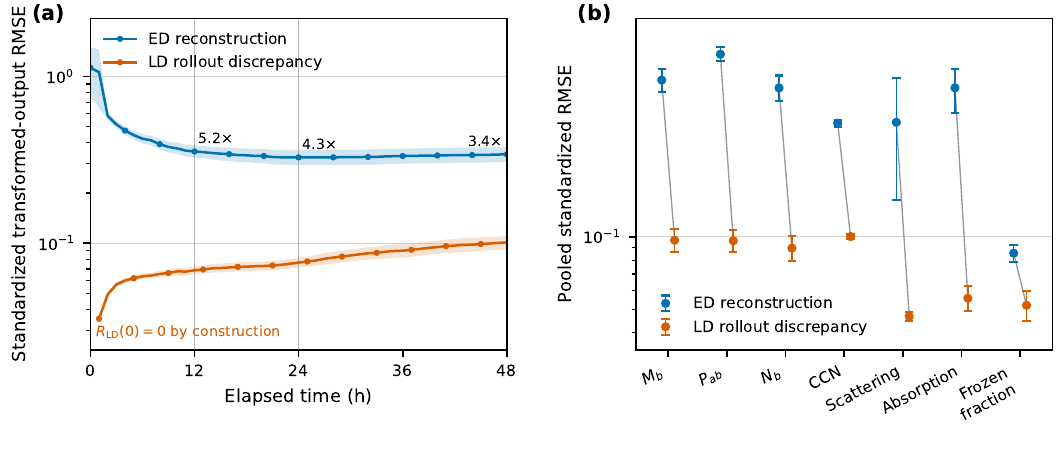}
\caption{Frozen-representation reconstruction error and additional latent-rollout discrepancy for the final model on all \modelparam{split.test_trajectories}{2{,}000} held-out trajectories. (a) Aggregate root-mean-square residual in standardized transformed-output space at each hourly state; curves average equally within each decoded block and then across the \modelparam{decoder.block_count}{seven} blocks, and shaded bands are 90\% trajectory-bootstrap intervals. The rollout discrepancy is zero at \(t=0\) by construction. (b) Pooled root-mean-square residual for every decoded block across trajectories and post-initial times; markers show point estimates and whiskers show 90\% trajectory-bootstrap intervals. Both panels use the Encoder--Decoder (ED) and latent-dynamics (LD) residual definitions in Eq.~\eqref{eq:ed-ld-residuals}; the ratios are magnitude comparisons, and additive error fractions would also require the cross term.}
\label{fig:ed_ld_decomposition}
\end{figure}

\hypertarget{decoded-total-mass-retention}{%
\subsection{Decoded total-mass retention}\label{decoded-total-mass-retention}}

Binary coagulation conserves total aerosol mass in the PartMC reference.
Figure~\ref{fig:total_mass_retention} therefore evaluates whether the decoded
AeroMELD rollout preserves its initial total mass without a hard projection or
a dedicated mass-conservation penalty. Panel (a) is normalized separately by each decoded trajectory's value at \(t=0\), so it diagnoses temporal drift independently of the frozen Encoder--Decoder's absolute initial-state mass bias. Panel (b) compares decoded total mass directly with the raw PartMC reference and therefore retains both the static representation offset and subsequent rollout drift.
The population summary is followed by two deliberately selected cases from the
lower and upper tails of the extrema of that self-normalized decoded-mass ratio
to characterize the accompanying errors in the other decoded diagnostics.

\begin{figure}[H]
\centering
\includegraphics[width=\textwidth]{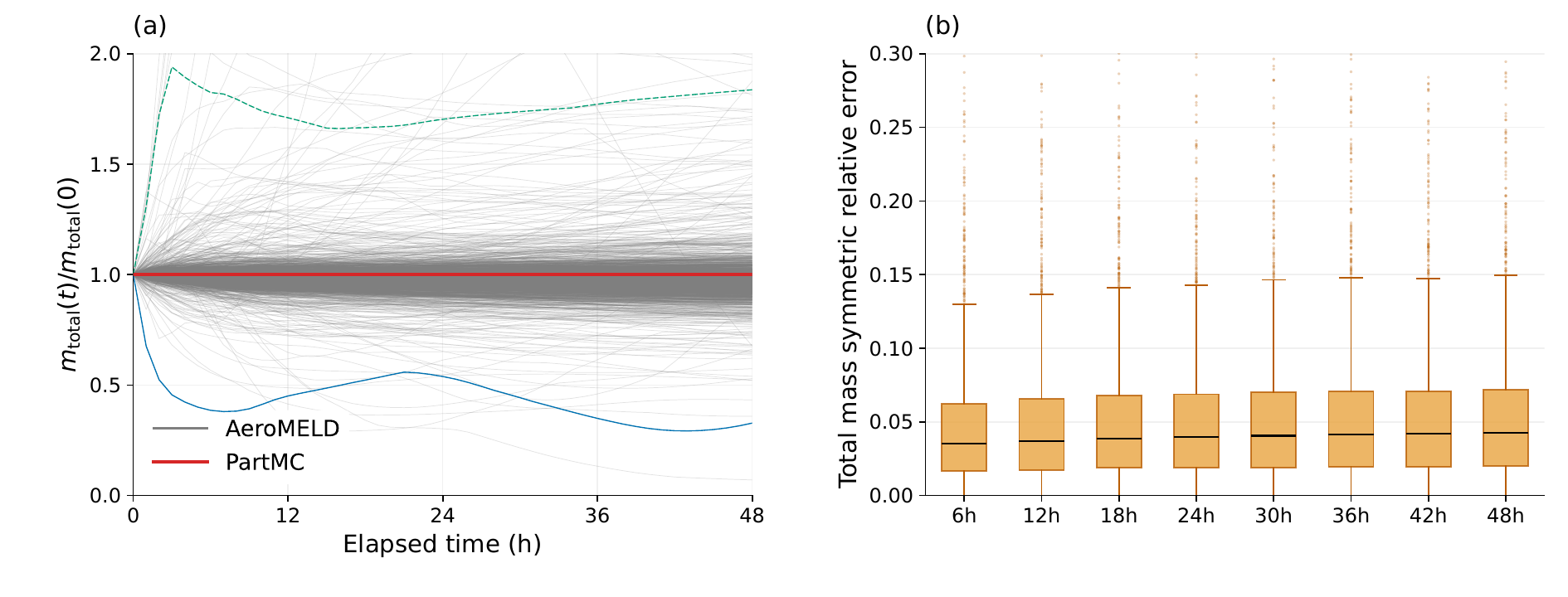}
\caption{Decoded total-mass behavior over
\modelparam{split.test_trajectories}{2{,}000} held-out 48-h coagulation
rollouts. (a) Decoded AeroMELD total aerosol mass normalized by the same
trajectory's value at \(t=0\). Thin gray curves show individual rollouts, and
the red line at unity marks the ideal mass-conserving PartMC baseline. The
solid blue and dashed green curves identify the lower- and upper-extreme cases
examined in Figures~\ref{fig:coag_rollout_lower_mass_discrepancy}
and~\ref{fig:coag_rollout_upper_mass_discrepancy}. (b) Distributions of the
symmetric relative error between decoded total mass and the raw PartMC
total-mass reference at 6-h intervals; these errors include both the frozen-
representation offset and temporal rollout drift. Boxes show medians and
interquartile ranges; whiskers extend to \(1.5\) times the interquartile range,
and points denote outliers. The lower case is the third-smallest trajectory
minimum among complete curves within the displayed 0--2 range; the upper case
has the largest trajectory maximum, in both cases over saved times from 0 to
48 h.}
\label{fig:total_mass_retention}
\end{figure}

\begin{figure}[H]
\centering
\includegraphics[width=\textwidth]{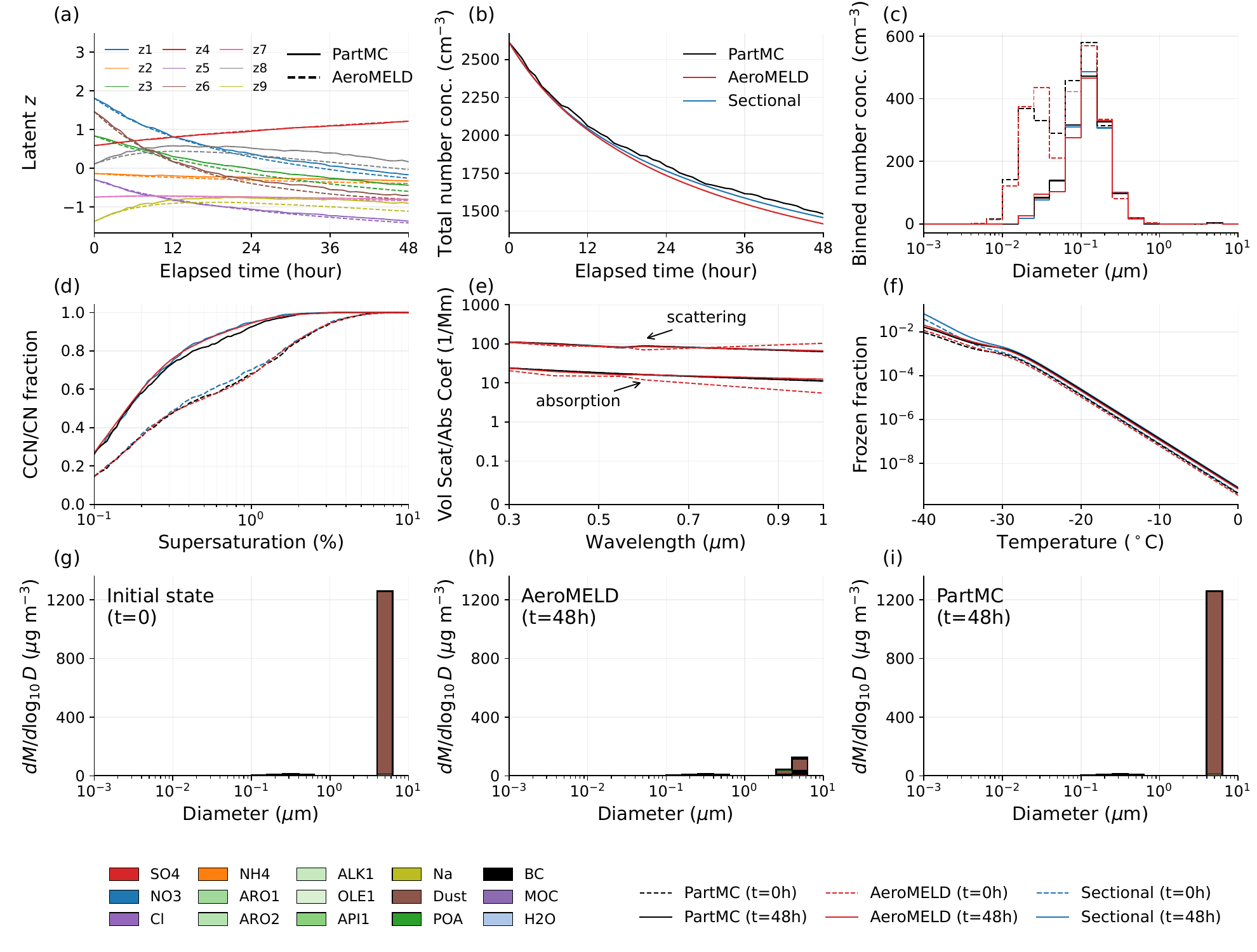}
\caption{Illustrative 48-h rollout selected from the lower tail of the minimum
self-normalized decoded-total-mass ratio, corresponding to the solid blue curve
in Figure~\ref{fig:total_mass_retention}(a). The selected case reaches
\modelparam{results.mass.lower_case.extreme_self_normalized_ratio}{0.292} of
its decoded \(t=0\) mass at
\modelparam{results.mass.lower_case.extreme_time_h}{43} h. (a) Nine
latent-shape coordinates \(z_1\)--\(z_9\); solid curves denote encoded PartMC
reference states and dashed curves denote the AeroMELD rollout. (b) Physical
total number concentration. (c--f) Binned number concentration, CCN-to-CN
ratio, aerosol optical properties, and frozen fraction; black, red, and blue
denote PartMC, AeroMELD, and the sectional result, respectively. (g--i)
Size-resolved composition at the initial and 48-h states; these panels show
PartMC and AeroMELD only. Case selection uses the mass-ratio extreme
independently of the overall composite diagnostic-error ranking.}
\label{fig:coag_rollout_lower_mass_discrepancy}
\end{figure}

\begin{figure}[H]
\centering
\includegraphics[width=\textwidth]{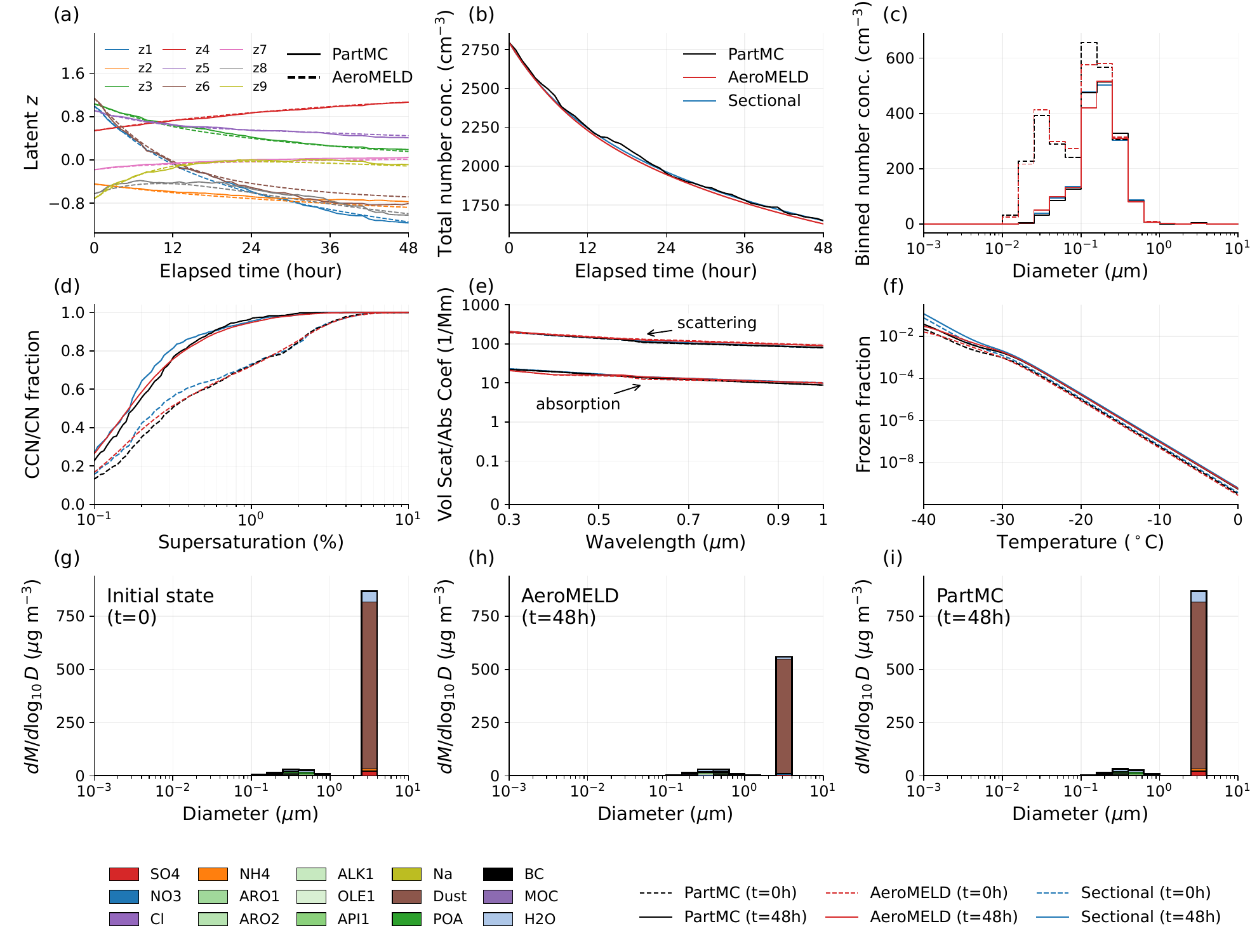}
\caption{Illustrative 48-h rollout selected from the upper tail of the maximum
self-normalized decoded-total-mass ratio, corresponding to the dashed green
curve in Figure~\ref{fig:total_mass_retention}(a). The selected case reaches
\modelparam{results.mass.upper_case.extreme_self_normalized_ratio}{1.94} of
its decoded \(t=0\) mass at
\modelparam{results.mass.upper_case.extreme_time_h}{3} h. Panels and line
conventions follow
Figure~\ref{fig:coag_rollout_lower_mass_discrepancy}, including the blue
sectional results in panels (b--f) and their absence from the composition
panels. Case selection uses the mass-ratio extreme independently of the
overall composite diagnostic-error ranking.}
\label{fig:coag_rollout_upper_mass_discrepancy}
\end{figure}

These cases separate the scalar decoded-mass budget from the multivariate
outputs used to diagnose aerosol--cloud, optical, and freezing behavior. At
48 h, the symmetric decoded-total-mass errors relative to PartMC are
\modelparam{results.mass.lower_case.partmc_error_48_pct}{72.3\%} for the lower
case and \modelparam{results.mass.upper_case.partmc_error_48_pct}{19.1\%} for
the upper case. Across the same two trajectories, the 48-h errors are
at most \modelparam{results.mass.selected_cases.ccn_48_max_pct}{2.23\%} for
CCN, \modelparam{results.mass.selected_cases.scattering_48_max_pct}{0.785\%}
for scattering,
\modelparam{results.mass.selected_cases.absorption_48_max_pct}{1.95\%} for
absorption, and
\modelparam{results.mass.selected_cases.frozen_48_max_pct}{0.537\%} for frozen
fraction. In these two selected rollouts, a large scalar decoded-mass
discrepancy coexists with much smaller errors in the retained diagnostics. The
size--composition structure and diagnostic mapping remain useful in these
individual rollouts despite incomplete scalar mass closure. The relationship
between these diagnostics and aerosol mass beyond the selected cases remains
unresolved.

Mass conservation remains essential for process coupling, source--sink
budgets, and subsequent microphysics. The fixed-environment coagulation
rollouts evaluate diagnostic behavior for the selected cases; cloud formation,
precipitation, radiation, and ice production in a coupled model lie outside
this evaluation. Enforcing or budgeting mass consistently across multiple
process operators remains an important extension.

\hypertarget{multistep-single-step-tables}{%
\subsection{Matched multistep--single-step comparison}\label{multistep-single-step-tables}}

Figure~\ref{fig:single_multi_step} compares per-lead-time symmetric relative
errors through 48 h. The supplementary tables below report a complementary
24-h summary for the same matched models, frozen representation, test
trajectories, and one-hour Euler integration. For each trajectory, overall
latent RMSE is computed over all nine latent coordinates and lead times from
1 to 24 h. We define \(\mathrm{SS/MS}\) as single-step error divided by
multistep error, so values above one favor multistep training.

\begin{table}[H]
\centering
\caption{Distribution of per-trajectory latent RMSE over \modelparam{split.test_trajectories}{2{,}000} held-out 24-h rollouts. SS/MS is single-step error divided by multistep error.}
\label{tab:multistep-overall}
\begin{tabular}{@{}lrrr@{}}
\toprule
Statistic & Multistep & Single-step & SS/MS \\
\midrule
Mean & \modelparam{results.matched.multistep.trajectory_24_mean_rmse}{0.0527} & \modelparam{results.matched.single_step.trajectory_24_mean_rmse}{0.0586} & \modelparam{results.matched.ss_ms.trajectory_24_mean_ratio}{1.11} \\
Median & \modelparam{results.matched.multistep.trajectory_24_median_rmse}{0.0477} & \modelparam{results.matched.single_step.trajectory_24_median_rmse}{0.0523} & \modelparam{results.matched.ss_ms.trajectory_24_median_ratio}{1.10} \\
P90 & \modelparam{results.matched.multistep.trajectory_24_p90_rmse}{0.0807} & \modelparam{results.matched.single_step.trajectory_24_p90_rmse}{0.0903} & \modelparam{results.matched.ss_ms.trajectory_24_p90_ratio}{1.12} \\
P95 & \modelparam{results.matched.multistep.trajectory_24_p95_rmse}{0.0941} & \modelparam{results.matched.single_step.trajectory_24_p95_rmse}{0.1072} & \modelparam{results.matched.ss_ms.trajectory_24_p95_ratio}{1.14} \\
P99 & \modelparam{results.matched.multistep.trajectory_24_p99_rmse}{0.1343} & \modelparam{results.matched.single_step.trajectory_24_p99_rmse}{0.1532} & \modelparam{results.matched.ss_ms.trajectory_24_p99_ratio}{1.14} \\
Maximum & \modelparam{results.matched.multistep.trajectory_24_maximum_rmse}{0.4284} & \modelparam{results.matched.single_step.trajectory_24_maximum_rmse}{0.7445} & \modelparam{results.matched.ss_ms.trajectory_24_maximum_ratio}{1.74} \\
\bottomrule
\end{tabular}
\end{table}

\begin{table}[H]
\centering
\caption{Lead-time dependence of mean latent RMSE across \modelparam{split.test_trajectories}{2{,}000} held-out trajectories. SS/MS is single-step error divided by multistep error.}
\label{tab:multistep-lead-time}
\begin{tabular}{@{}lrrr@{}}
\toprule
Lead time & Multistep & Single-step & SS/MS \\
\midrule
1 h & \modelparam{results.matched.multistep.lead_1_mean_trajectory_rmse}{0.0188} & \modelparam{results.matched.single_step.lead_1_mean_trajectory_rmse}{0.0188} & \modelparam{results.matched.ss_ms.lead_1_mean_trajectory_ratio}{1.004} \\
6 h & \modelparam{results.matched.multistep.lead_6_mean_trajectory_rmse}{0.0418} & \modelparam{results.matched.single_step.lead_6_mean_trajectory_rmse}{0.0439} & \modelparam{results.matched.ss_ms.lead_6_mean_trajectory_ratio}{1.049} \\
12 h & \modelparam{results.matched.multistep.lead_12_mean_trajectory_rmse}{0.0518} & \modelparam{results.matched.single_step.lead_12_mean_trajectory_rmse}{0.0571} & \modelparam{results.matched.ss_ms.lead_12_mean_trajectory_ratio}{1.104} \\
24 h & \modelparam{results.matched.multistep.lead_24_mean_trajectory_rmse}{0.0625} & \modelparam{results.matched.single_step.lead_24_mean_trajectory_rmse}{0.0734} & \modelparam{results.matched.ss_ms.lead_24_mean_trajectory_ratio}{1.175} \\
\bottomrule
\end{tabular}
\end{table}

Multistep training lowers per-trajectory error in \modelparam{results.matched.multistep_improved_fraction_pct}{68.7\%} of cases (paired one-sided Wilcoxon signed-rank \(p=\modelparam{results.matched.wilcoxon_less_p}{6.58\times10^{-91}}\)). Both models remain finite over all matched rollouts and preserve decreasing total number empirically.

Relative to raw PartMC, both models inherit an approximately \(\modelparam{results.mass.multistep.initial_bias_mean_pct}{-2.4\%}\)
decoded total-mass offset at initialization; this absolute representation bias
is distinct from the self-normalized temporal retention in
Figure~\ref{fig:total_mass_retention}. Over 24 h, the multistep model has mean
mass drift \(\modelparam{results.mass.multistep.drift_24_mean_pct}{-0.009\%}\), median absolute drift \(\modelparam{results.mass.multistep.drift_24_abs_median_pct}{3.3\%}\), and 95th-percentile
absolute drift of approximately \(\modelparam{results.mass.multistep.drift_24_abs_p95_pct}{18.3\%}\); the single-step values are
\(\modelparam{results.mass.single_step.drift_24_mean_pct}{-0.59\%}\), \(\modelparam{results.mass.single_step.drift_24_abs_median_pct}{4.1\%}\), and approximately \(\modelparam{results.mass.single_step.drift_24_abs_p95_pct}{20.2\%}\). These values distinguish static representation bias from subsequent rollout drift; exact mass conservation is absent from both training objectives.
Figure~\ref{fig:total_mass_retention} extends the final-model drift diagnostic
to 48 h.

\end{document}